\documentclass[12pt]{article}
\usepackage{amsmath}
\usepackage{graphicx}
\usepackage{enumerate}
\usepackage{natbib}
\usepackage{url} 

\usepackage{authblk}
\usepackage{subfigure}
\usepackage{amssymb,amsthm}
\usepackage{wrapfig}
\usepackage{multirow}
\let\chapter\section 
\usepackage[titlenumbered,linesnumbered,ruled,vlined]{algorithm2e}
\usepackage{caption}
\usepackage{bm}
\usepackage{booktabs}
\usepackage[smallscripts]{moresize}

\usepackage{textcomp} 

\newcommand{\blind}{1}

\addtolength{\oddsidemargin}{-.5in}%
\addtolength{\evensidemargin}{-.5in}%
\addtolength{\textwidth}{1in}%
\addtolength{\textheight}{1.3in}%
\addtolength{\topmargin}{-.8in}%

\usepackage{hyperref}
\hypersetup{ 
    colorlinks,
    citecolor=blue,
    filecolor=blue,
    linkcolor=blue,
    urlcolor=blue 
}

\newtheorem{lemma}{Lemma}
\newtheorem{proposition}{Proposition}
\newtheorem{theorem}{Theorem}

\newtheorem{corollary}{Corollary}
\newtheorem{condition}{Condition}

\newcommand{\bmm}{\mbox{\bf m}}

\newcommand{\be}{\mbox{\bf e}}

\newcommand{\bu}{\mathbf{u}}
\newcommand{\bv}{\mathbf{v}}
\newcommand{\bw}{\mbox{\bf w}}
\newcommand{\bx}{\mbox{\bf x}}
\newcommand{\by}{\mbox{\bf y}}

\newcommand{\bC}{\mbox{\bf C}}
\newcommand{\bD}{\mbox{\bf D}}
\newcommand{\bE}{\mbox{\bf E}}

\newcommand{\bI}{\mbox{\bf I}}

\newcommand{\bP}{\mbox{\bf P}}

\newcommand{\bS}{\mbox{\bf S}}
\newcommand{\bU}{\mbox{\bf U}}
\newcommand{\bV}{\mbox{\bf V}}

\newcommand{\bX}{\mbox{\bf X}}

\newcommand{\bY}{\mbox{\bf Y}}

\newcommand{\bZ}{\mbox{\bf Z}}

\newcommand{\bzero}{\mbox{\bf 0}}
\newcommand{\bveps}{\mbox{\boldmath $\varepsilon$}}

\newcommand{\bepsilon}{\mbox{\boldmath $\epsilon$}}

\begin{document}

%
%


\def\spacingset#1{\renewcommand{\baselinestretch}%
{#1}\small\normalsize} \spacingset{1}


\if1\blind
{
  \title{\bf Sequential scaled sparse factor regression}
  \author{Zemin Zheng, Yang Li, Jie Wu and Yuchen Wang 
  	 \hspace{.2cm}\\
    International Institute of Finance, The School of Management, University of Science and Technology of China} 
  \maketitle
} \fi

\if0\blind
{
  \bigskip
  \bigskip
  \bigskip
  \begin{center}
    {\LARGE\bf Sequential scaled sparse factor regression}
\end{center}
  \medskip
} \fi

\bigskip
\begin{abstract}
 Large-scale association analysis between multivariate responses and predictors is of great practical importance, as exemplified by modern business applications including social media marketing and crisis management. Despite the rapid methodological advances, how to obtain scalable estimators with free tuning of the regularization parameters remains unclear under general noise covariance structures. In this paper, we develop a new methodology called sequential scaled sparse factor regression (SESS) based on a new viewpoint that the problem of recovering a jointly low-rank and sparse regression coefficient matrix can be decomposed into several univariate response sparse regressions through regular eigenvalue decomposition. It combines the strengths of sequential estimation and scaled sparse regression, thus sharing the scalability and the tuning free property for sparsity parameters inherited from the two approaches. The stepwise convex formulation, sequential factor regression framework, and tuning insensitiveness make SESS highly scalable for big data applications. Comprehensive theoretical justifications with new insights into high-dimensional multi-response regressions are also provided. We demonstrate the scalability and effectiveness of the proposed method by simulation studies and stock short interest data analysis.
\end{abstract} 

\noindent%
{\it Keywords:}  Big data; Sparse reduced-rank regression; Scalability; Tuning insensitiveness; Latent factors; Stock short interest analysis. 

\spacingset{1.45} 
\section{Introduction} \label{Sec1}
Highly developed technologies and devices have brought in massive data sets in various fields ranging from health and bioinformatics to marketing and economics. In these big data applications, modeling complex dependence structures of multivariate outcomes using observed features is of great importance since it reveals domain knowledge behind the data. For instance, inferring the influence networks from user activities has wide applications in social media marketing \citep{gomez2010inferring} and crisis management \citep{starbird2012will}. By representing the dependency between the outcomes and the predictors through a jointly low-rank and sparse structure, thus alleviating the curse of dimensionality and facilitating model interpretability, sparse reduced-rank regression has gained increasing popularity in large-scale association analyses.

Depending on how the regression coefficient matrix is recovered, sparse reduced-rank regression can generally be grouped into two classes. One is directly estimating the regression coefficient matrix via different kinds of regularization \citep{Anderson1951, Izenman1975, yuan2007dimension, bunea2011optimal, candes2009tight, giraud2011low, negahban2011estimation, bunea2012joint, chen2012, chen2013, lian2015, liu2015calibrated, Goh2017,Fan2019}, where $L_1$ and nuclear norm penalizations are convex relaxations popularly employed to enforce sparse and low-rank structures, respectively. The other is to recover the coefficient matrix from a latent factor point of view by combining the estimated sparse singular vectors based on singular value decomposition \citep{chen2012reduced, mishra2017sequential, Uematsu17, bahadori2016scalable}. Compared with the former class, this type of methods generally enjoy lower computational cost and can be efficiently parallelized in various computing devices. In particular, the sequential estimation procedures proposed in \cite{mishra2017sequential} and \cite{bahadori2016scalable} demonstrate scalability in large-scale applications by decomposing the estimation of the entire coefficient matrix into unit rank matrix recovery problems. They are guaranteed to stop in a few steps under low-rank structures. Nevertheless, these sequential approaches need to tune the optimal sparsity parameter in each step, which can vary between different layers and account for major computational cost when tuned over a wide range of potential values by either cross-validation or some information criterion. 



To further enhance the scalability, it is of urgent need to develop methodology which enjoys free tuning of the regularization parameters for large-scale association analyses. For high-dimensional univariate response sparse regression, tuning free methods have been proposed in \cite{belloni2011square} and \cite{sun2012scaled}, where the universal regularization parameter controlling the sparsity was shown to be independent of the noise level. It was achieved through obtaining an equilibrium between iteratively estimating the noise level via the mean residual square and scaling the penalty in proportion to the estimated noise level. However, it is much more difficult to develop tuning insensitive methods for large-scale multi-response regression since the population covariance matrix of the noise vector can adopt general high-dimensional structures and its sample estimate is usually not invertible to formulate a joint estimation procedure. When the noise covariance matrix is diagonal, meaning that the noises related to different responses are uncorrelated, \cite{liu2015calibrated} proposed the calibrated multivariate regression to attain tuning insensitiveness for either nuclear norm or sparsity penalization. For general structures of the noise covariance matrix or jointly low-rank and sparse coefficient matrix, to the best of our knowledge, there is no existing work which enjoys tuning free property for the sparsity parameters as estimating an invertible noise covariance matrix needs extra penalization in high dimensions.  

In this article, we develop a new methodology for high-dimensional multi-response regression called sequential scaled sparse factor regression (SESS), which combines the strengths of sequential estimation and scaled sparse regression, thus sharing the scalability and the tuning free property for sparsity parameters inherited from the two approaches. The main contributions of this paper are as follows. First of all, we rigorously prove that the problem of recovering a jointly low-rank and sparse regression coefficient matrix can be decomposed into several univariate response sparse regressions through regular eigenvalue decomposition, which provides a new viewpoint on large-scale association analyses. Compared with the sparse eigenvalue problem, regular eigenvalue decomposition is convex and guaranteed to converge. Second, based on the new viewpoint, the proposed approach SESS adopts a universal sparsity parameter in the subsequent univariate response regressions, which is among the first attempts to achieve tuning insensitive jointly low-rank and sparse estimation under general noise covariance structures. Accompanied with a simple BIC-type information criterion for identifying the true rank, whose choices are discrete and thus demonstrating significant gaps between the correct one and the other candidates, SESS is tuning insensitive in both sparsity and rank. The stepwise convex formulation, sequential factor regression framework, and tuning insensitiveness make SESS highly scalable for big data applications. Last but not least, we provide comprehensive theoretical justifications on the effectiveness of the suggested methodology including consistency in estimation, prediction, and rank selection under mild and interpretable conditions, which reveal new insights into high-dimensional multi-response regression. 

The rest of the paper is organized as follows. Section \ref{sec2} presents the model setting and our new methodology. We establish asymptotic properties of the proposed method in Section \ref{sec3}. In Section \ref{sec4}, we verify the theoretical results empirically through simulation examples. An application to the stock short interest data is provided in Section \ref{real data}. Section \ref{sec6} concludes with extensions and possible future work. All technical details are relegated to the Supplementary Material.

\section{Sequential scaled sparse factor regression} \label{sec2} 

\subsection{Model setting}\label{sec2.1}

Consider the following multi-response regression model in the fixed design setting
\begin{equation}\label{mod1}
\bY = \bX \bC^* + \bE,
\end{equation}
where $\bY = (\by_{1},...,\by_{n})^{T}$ denotes an $n\times q$ matrix with $q$ responses, $\bX = (\bx_{1},...,\bx_{n})^{T}$ is an $n\times p$ design matrix with $p$ predictors, $\bC^*$ is an unknown $p \times q$ coefficient matrix, and $\bE = (\be_{1},...,\be_{n})^{T}$ is an $n\times q$ random error matrix with each row vector $\be_i$ independent and identically distributed (i.i.d.) as $N(\mathbf{0}, \mathbf{\Sigma})$\footnote{The Gaussian assumption is not essential and we will show the validity of the proposed method under sub-Gaussian errors in Section \ref{sec3}.}. The columns of $\bX$ are standardized to have a common $L_2$-norm $\sqrt{n}$. Both dimensions $p$ and $q$ are allowed to diverge non-polynomially with the sample size $n$ and $\bC^*$ is assumed to be jointly low-rank and sparse (in rows), entailing the selection of significant predictors. 

Similar to \cite{mishra2017sequential} and \cite{bahadori2016scalable}, we will recover the coefficient matrix $\bC^*$ from a latent factor point of view to facilitate sequential estimation. Specifically, based on the SVD representation of $n^{-1/2}\bX\bC^*$, we have
\begin{align}
&\frac{1}{\sqrt{n}}\bX \bC^* = (\frac{1}{\sqrt{n}} \bX\bU^*) \bD^*_0 \bV_0^{*T}, \nonumber \label{eq:C*1} \\
&\text{s.t.} \quad (\frac{1}{\sqrt{n}}\bX\bU^{*})^T (\frac{1}{\sqrt{n}}\bX \bU^{*}) = \bV_0^{*T} \bV_0^{*} = \bI_{r^*},  
\end{align}
where $\bC^* = \bU^* \bD^*_0 \bV_0^{*T} \in \mathbb{R}^{p \times q}$, $\bU^* = (\bu_1^*, \dots, \bu_{r^*}^*) \in \mathbb{R}^{p \times r^*}$, $\bD_0^* = \text{diag}\{d_1^*, \dots, d_{r^*}^*\} \in \mathbb{R}^{r^* \times r^*}$, $\bV^*_0 \in \mathbb{R}^{q \times r^*}$, and $r^*$ is the rank of $\bC^*$ that allows to be divergent. In the population level, $\bV^{*}_0$ can be identified as the right singular vectors of $\bX\bC^*$. Then $\bU^*$ can be obtained from $\bC^*\bV^{*}_0$ after rescaling the columns even if $p$ is larger than $n$, which gives the feasibility of the above decomposition.

It is worth pointing out that the nonzero singular values in $\bD^*_0$ can diverge with the dimensionality $q$ and their magnitudes can be as large as $n^{-1/2}\|\bX \bC^*\|_F$, which is around the order of $\sqrt{q}$ when each component of the $n$ by $q$ noiseless response matrix $\bY^* = \bX \bC^*$ is around a constant level. To ensure the identifiability of the left singular vectors $\bu_k^*$, we assume that $\bu^*_k$ are sparse (inherited from the row sparsity of $\bC^*$) and $\bu^*_k \perp \text{Ker}(\bP)$ with $\text{Ker}(\bP)$ the null space of the Gram matrix $\bP = n^{-1} \bX^T \bX$. For the right singular vectors, we do not impose any sparsity constraint and will discuss in Section \ref{Sec2.2} how to obtain sparse estimates if the population ones are indeed sparse.




For ease of presentation, we set $\bV^* = \bV_0^{*} \bD_0^* = (\bv_1^*, \dots, \bv_{r^*}^*) \in \mathbb{R}^{q \times r^*}$ so that the right singular vectors absorb the singular values and are no longer of unit length. It yields
\begin{align}
&\bC^* = \bU^* \bV^{*T}= \sum_{k=1}^{r^*} \bu_k^* \bv_k^{*T} = \sum_{k=1}^{r^*} \bC_k^*, \label{eq:C*}
\end{align}
where $\bC^*_k = \bu_k^* \bv_k^{*T}$ is the $k$th layer unit rank matrix of $\bC^*$. Here the singular vectors are sorted by the magnitudes of the singular values of $\bX \bC^*$, consistent with the contribution to the prediction of $\bY$. Generally speaking, the decomposition in the form of $\bC^* = \bU^* \bV^{*T}$ is not unique without orthogonality constraints. But decomposition (\ref{eq:C*}) is the special one that gives $r^*$ uncorrelated latent factors $\bX \bu_k^*$ in view of the orthogonality of $n^{-1/2}\bX \bU^*$ in \eqref{eq:C*1}. Each latent factor is a linear combination of a small subset of the predictors due to the sparsity of $\bu^*_k$. Our goal is to scalably and accurately estimate the singular vectors $\bu_k^*$ and $\bv_k^*$, as well as  the true rank $r^*$, so that the latent factors and their impacts can be recovered.

\subsection{Scalable estimation by SESS}\label{Sec2.2}

The proposed method SESS is motivated by the fact that in the noiseless case $\bY^* = \bX \bC^*$ with $\bC^*$ adopting decomposition (\ref{eq:C*}), the latent factors $\bZ_k^* = \bX \bu_k^*$, $1 \leq k \leq r^*$, are the top-$r^*$ eigenvectors of the following eigenvalue problem
\begin{align*}
(nq)^{-1} \bY^*\bY^{*T} \bZ = \lambda \bZ.
\end{align*}
The corresponding eigenvalues
\begin{align}\label{lambda}
\lambda_k = (nq)^{-1} \|\bX \bC^*_k\|_F^2 = q^{-1} d_k^{*2}
\end{align}
are typically around the constant level based on the discussion on the magnitudes of $d_k^{*}$ after (\ref{eq:C*1}). Moreover, it can be verified that the right singular vectors $\bv_1^*, \dots, \bv_r^*$ satisfy the following intrinsic relationship with $\bu_1^*, \dots, \bu_r^*$,
\begin{align}\label{v}
\bv_k^* = \frac{1}{\bu_k^{*T} \bX^T \bX \bu_k^{*}} \bY^{*T} \bX\bu_k^{*} = \frac{1}{n} \bY^{*T} \bX\bu_k^{*}= \frac{1}{n} \bY^{*T} \bZ_k^{*}.
\end{align}

Therefore, with data matrix $(\bX, \bY)$, we propose to recover the left singular vectors $\bu_k^*$ sequentially in two steps. The first step is to solve the regular eigenvalue problem
\begin{align}\label{step1}
(nq)^{-1}  \bY\bY^{T}\bZ & = \lambda \bZ
\end{align}
and get the estimated latent factors $\widehat{\bZ}_k$ with $\|\widehat{\bZ}_k\|_2 = \sqrt{n}$ as well as the corresponding eigenvalues $\widehat{\lambda}_k$. Then various kinds of regularization methods can be applied to recover the singular vectors $\bu_k^{*}$. See \cite{tibshirani1996regression, FanLi2001, Zou2006, Candes2007, Fan2009, FanLv2011, FanFan2014, FanyLv2014, YuFeng2014, Weng2019}, among many others.

To facilitate the theoretical analysis, here we utilize the popularly used Lasso \citep{tibshirani1996regression} to obtain the sparse left singular vectors $\widehat{\bu}_k$ by solving 
\begin{align*}
\widehat{\bu}_{k} = {\rm arg}\min_{\bu} \Big\{ \frac{\|\widehat{\bZ}_{k}-\bX\bu\|^{2}_{2}}{2n} + \omega_k \|\bu\|_1 \Big\},
\end{align*}
where $\omega_k$ is a regularization parameter controlling sparsity and needs to be tuned by cross-validation or certain information criterion.
In big data applications, we can further save the tuning of the regularization parameter through the following scaled version of the Lasso \citep{sun2012scaled}
\begin{align*} 
  (\widehat{\bu}_{k}, \widehat{\sigma}_k) & = {\rm arg}\min_{\bu,\sigma} \Big\{\frac{\|\widehat{\bZ}_{k}-\bX\bu\|^{2}_{2}}{2n\sigma}+\frac{\sigma}{2}+\omega_{0}\|\bu\|_{1} \Big\}, 
\end{align*}
where $\omega_0$ is a universal regularization parameter to be specified later. Note that $\widehat{\sigma}_k$ here is no longer an estimate of the error standard deviation but utilized to adjust for the standard error $\|\widehat{\bZ}_{k}-\bX\bu_k^*\|_2/\sqrt{n}$ in the $k$th layer. 
The solution $\widehat{\bu}_{k}$ of the scaled Lasso is the same as that of the Lasso with $\omega_k = \widehat{\sigma}_k \omega_{0}$.

On the other hand, motivated by (\ref{v}), after getting $\widehat{\bZ}_{k}$, the right singular vectors $\bv_{k}^*$ can be estimated as
\begin{align}
&\widehat{\bv}_k = n^{-1} \bY^T \widehat{\bZ}_k. \label{eq:solu} 
\end{align}
We will show in Section \ref{sec3} that the convergence rates of $\widehat{\bv}_k$ are basically the same as that of $\widehat{\bZ}_k$ after adjusting for the corresponding scales. Then based on the estimation consistency, a simple entry-wise thresholding will yield sparse right singular vectors to facilitate the selection of response variables if $\bv^*_k$ are indeed sparse. However, since the sparse structure of $\bv^*_k$ can reduce the magnitude of the singular value $d_k^{*}$ in view of $d_k^{*} = n^{-1/2} \|\bX \bC^*_k\|_F$, if $d_k^{*}$ becomes not that large compared with the noise $\|\bE\|_2$, better accuracy would be achieved by directly estimating the coefficient matrix via some co-sparsity inducing penalty \citep{mishra2017sequential,Uematsu17}.

%

Finally, since the true rank $r^*$ is unknown in practice, we will estimate $\widehat{\bZ}_k$ sequentially until the $k$th eigenvalue $\widehat{\lambda}_k$ of (\ref{step1}) is no larger than certain tolerance level $\mu$, which controls the maximum rank $r$. A simple tuning procedure based on $\widehat{\bZ}_k$ will be provided in Section \ref{sec3} to identify the optimal rank $\widehat{r}$. Then we have unit rank matrices $\widehat{\bC}_k = \widehat{\bu}_k \widehat{\bv}^T_k$ as the estimates of $\bC^*_k$ and the estimated regression coefficient matrix $\widehat{\bC}$ is defined as 
\begin{equation*}
  \widehat{\bC} =\sum_{k = 1}^{\widehat{r}} \widehat{\bC}_k = \sum_{k=1}^{\widehat{r}}\widehat{\bu}_k \widehat{\bv}_k^{T}.
\end{equation*}
The implementation of SESS is summarized in Algorithm \ref{Algo}.

\begin{algorithm}[htbp]
Algorithm SESS. \label{Algo}
\begin{tabbing}
   \qquad \enspace Input: $\bY \in \mathbb{R}^{n\times q}$, $\bX \in \mathbb{R}^{n\times p}$, and termination parameter $\mu$\\
   \qquad \enspace $j \gets 1, \ \widehat{\bY}_{j} \gets 0$\\
   \qquad \enspace repeat\\
   \qquad \qquad $(\widehat{\bZ}_{j}, \widehat{\lambda}_j) \gets j$th eigenvector and eigenvalue of $(nq)^{-1}\bY\bY^T\bZ = \lambda \bZ$ \label{line:eig}\\ 
   \qquad \qquad if $\widehat{\lambda}_j  > \mu$ then \\
   \qquad \qquad \qquad $\widehat{\bv}_j \gets n^{-1} \bY^T \widehat{\bZ}_j$  \\
   \qquad \qquad \qquad $\widehat{\bY}_{j}\gets \widehat{\bY}_{j} + \widehat{\bZ}_j \widehat{\bv}_j^T$ \\
   \qquad \qquad \qquad $j \gets j + 1$\\
   \qquad \qquad end\\
   \qquad \enspace tune the optimal rank $\widehat{r}$ by information criterion (\ref{rank}) \\
   \qquad \enspace $k \gets 1$ \\
   \qquad \enspace repeat\\
   \qquad \qquad $(\widehat{\bu}_{k}, \widehat{\sigma}_k) \gets {\rm arg}\min_{\bu,\sigma}\{(2n\sigma)^{-1} \|\widehat{\bZ}_{k}-\bX\bu\|^{2}_{2} + 2^{-1} \sigma + \omega_{0}\|\bu\|_{1}\}$ \\
   \qquad \qquad $\widehat{\bC}_k \gets \widehat{\bu}_{k}\widehat{\bv}_{k}^{T}$\\
   \qquad \qquad $k \gets k + 1$\\
   \qquad \qquad until $k > \widehat{r}$\\

\qquad \enspace output $\widehat{\bC} = \sum_{k = 1}^{\widehat{r}} \widehat{\bC}_k$
\end{tabbing}
\end{algorithm}

Although SESS adopts a similar sequential estimation framework as those in \cite{mishra2017sequential} and \cite{bahadori2016scalable}, it enjoys two significant advantages. First of all, the optimal sparsity parameter depends on the noise level and varies from layer to layer in \cite{mishra2017sequential} and \cite{bahadori2016scalable}. By contrast, SESS converts the multi-response regression problem into several univariate response regressions, thus making it possible to adopt a universal sparsity parameter $\omega_0$ and substantially improve the computational efficiency. Second, the major sparse estimation technique such as sparse eigenvalue decomposition utilized in \cite{bahadori2016scalable} is a nonconvex optimization problem and may not guarantee convergence under general settings \citep{Ma2013}. But the regular eigenvalue problem and the scaled Lasso estimation render SESS a stepwise convex formulation with guaranteed numerical stability. 

Nevertheless, we need to address two main issues before claiming the success of SESS. The first one is that when solving the regular eigenvalue problem (\ref{step1}), we lose the space constraint on $\bZ$ as it may not fall into the column space of $\bX$. Then how much price in accuracy we pay to trade for the computational efficiency is unknown. The other issue is that the relationship between the estimated latent factors $\widehat{\bZ}_k$ and their population counterparts $\bX \bu^*_k$ can be different from the standard high-dimensional linear regression. Therefore, whether regularization methods such as the Lasso or the scaled Lasso apply and how to choose the corresponding regularization parameters deserve careful investigation. We will provide comprehensive theoretical guarantees with new insights in the next section. 

\section{Theoretical properties} \label{sec3}
This section will present the theoretical properties of the proposed method SESS. First of all, we list the following technical conditions and discuss their relevance. 

\begin{condition} \label{cond2} For some positive constant $d_{\lambda}$, the top-$r^*$ population eigenvalues $\lambda_k$ in \eqref{lambda} satisfy $\lambda_{k} - \lambda_{k + 1} \geq d_{\lambda}$, $k = 1, \ldots, r^*$. 
\end{condition}

\begin{condition}\label{cond3} The eigenvalues of the population covariance matrix $\mathbf{\Sigma}$ for the random error vector are bounded from above and below by positive constants $\gamma_u$ and $\gamma_l$, respectively. 
\end{condition}

\begin{condition} \label{conduv} There exists positive constants $U$ and $V$ such that the lengths of the left and right population singular vectors satisfy $\|\bu^*_k\|_2 \leq U$ and $\|\bv^*_k\|_2/\sqrt{q} \leq V$ for any $k$, $k = 1, \ldots, r^*$.
\end{condition}

Condition \ref{cond2} is imposed to ensure the identifiability of the latent factors with $d_{\lambda}$ the minimum separation between successive nonzero eigenvalues. Similar identifiability assumptions can be found in \citet{Fan2016,Fan2017,Uematsu17}. Condition \ref{cond3} allows for general covariance structure of the random error vector as long as its eigenvalues are bounded. The upper bound controls the noise level while the lower bound is only needed in Theorem \ref{theo2} to facilitate rank selection. It is weaker than the independence assumption on the error vector imposed in \citet{bunea2012joint} and \cite{liu2015calibrated}. Condition \ref{conduv} puts a mild assumption on the lengths of the singular vectors. The constant upper bound on $\|\bu^*_k\|_2$ is reasonable due to the sparsity of $\bu^*_k$ and can be consistent with the aforementioned scale of $\|\bX \bu_k^*\|_2 = \sqrt{n}$ since the design matrix $\bX$ can typically satisfy some sparse eigenvalue assumption \citep{candes2005, Uematsu17, bahadori2016scalable}. 

Besides the above assumptions, we also need to characterize the model identifiability about the sparse left singular vectors by restricting the correlations between the significant predictors and the noise ones. Recall that the Gram matrix $\bP = n^{-1} \bX^T \bX$. Given $\xi \geq 0$ and $S \subset \{1,...,p\}$, the sign-restricted cone invertibility factors introduced in \cite{ye2010rate} are defined as 
\begin{align*}
  F_m (\xi,S) & = \inf \left\{|S|^{1/m}\|\bP \bu\|_{\infty}/ \|\bu\|_{m}: \bu \in \mathcal{G}_{-}(\xi,S) \right\} 
\end{align*}
for positive integer $m$ with the sign-restricted cone $\mathcal{G}_{-}(\xi,S)= \{\bu: \|\bu_{S^c}\|_1 \leq \xi \|\bu_S\|_1 \neq 0, \bu_j \bx_j^T \bX \bu \leq 0, \forall j \notin S\}$. As pointed out in \cite{sun2012scaled}, the bounded sign-restricted cone invertibility factor assumption can be slightly weaker than the parallel condition on the compatibility factor or the restricted eigenvalue \citep{bickel2009simultaneous}. So we put it as follows to ensure the identifiability of $S_k = \rm{supp}$$(\bu_k^*)$, which are the supports of the left population singular vectors.

\begin{condition}\label{cond1}
For certain positive constants $\xi$, $F_1$, and $F_2$, the sign-restricted cone invertibility factors $F_1 (\xi, S_k) \geq F_1$ and $F_2 (\xi, S_k) \geq F_2$ for any $k$, $k = 1, \ldots, r^*$.
\end{condition}




Since both $\widehat{\bZ}_k$ and $-\widehat{\bZ}_k$ are solutions of the regular eigenvalue problem \eqref{step1}, we assume that $\widehat{\bZ}_k$ takes the correct direction to facilitate the theoretical analysis. That is, the angle between $\widehat{\bZ}_k$ and $\bZ^*_k$ is no more than a right angle. Otherwise, we can change $\widehat{\bZ}_k$ to $-\widehat{\bZ}_k$ to satisfy that. Now we are ready to show the main results.

\begin{proposition}[Consistency of latent factors]
\label{ZYYZ}
Under Conditions \ref{cond2}-\ref{cond3}, for sufficiently large $n$, the following statement holds with probability at least $1 - \exp(-\frac{n}{2})$ for any $k$, $1 \leq k \leq r^*$,
\begin{align*}
\frac{1}{\sqrt{n}}\|\widehat{\bZ}_k - \bX\bu^*_k\|_2 \leq \frac{4 \sqrt{\lambda_1} \gamma_u} {d_\lambda} \left(\frac{\sqrt{n} + \sqrt{q}}{\sqrt{nq}} \right).
\end{align*}
\end{proposition}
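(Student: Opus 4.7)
The plan is to frame this as a standard eigenvector perturbation problem and apply the Davis--Kahan $\sin\Theta$ theorem. Let $\bM = (nq)^{-1}\bY\bY^T$ and $\bM^* = (nq)^{-1}\bY^*\bY^{*T}$ with $\bY^* = \bX\bC^*$. First, I would identify the population eigenstructure: using the SVD decomposition \eqref{eq:C*1} together with the orthogonality constraint $\bU^{*T}\bX^T\bX\bU^* = n\bI_{r^*}$, a direct computation shows the nonzero eigenvalues of $\bM^*$ are exactly the $\lambda_k = q^{-1}d_k^{*2}$ in \eqref{lambda}, with associated unit eigenvectors $\bZ_k^*/\sqrt{n} = \bX\bu_k^*/\sqrt{n}$. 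Similarly, $\widehat{\bZ}_k/\sqrt{n}$ is, by construction, the $k$th unit eigenvector of $\bM$.

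Second, I would invoke Davis--Kahan. Under Condition \ref{cond2} the relevant eigengap around $\lambda_k$ is at least $d_\lambda$, so the sine of the angle $\theta_k$ between $\widehat{\bZ}_k$ and $\bZ_k^*$ satisfies $\sin\theta_k \le \|\bM - \bM^*\|_{\mathrm{op}}/d_\lambda$ (up to an absolute constant, depending on the variant used). Since the angle was assumed to lie in $[0,\pi/2]$, the chord-versus-sine bound gives
\begin{equation*}
n^{-1/2}\|\widehat{\bZ}_k - \bZ_k^*\|_2 \;=\; 2\sin(\theta_k/2) \;\le\; \sqrt{2}\,\sin\theta_k \;\le\; \frac{\sqrt{2}\,\|\bM-\bM^*\|_{\mathrm{op}}}{d_\lambda}.
\end{equation*}

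Third, I would bound $\|\bM - \bM^*\|_{\mathrm{op}}$ via the decomposition
\begin{equation*}
\bM - \bM^* \;=\; (nq)^{-1}\bigl(\bY^*\bE^T + \bE\bY^{*T} + \bE\bE^T\bigr),
\end{equation*}
so that $\|\bM - \bM^*\|_{\mathrm{op}} \le (nq)^{-1}\bigl(2\|\bY^*\|_{\mathrm{op}}\|\bE\|_{\mathrm{op}} + \|\bE\|_{\mathrm{op}}^2\bigr)$. For the signal part, the SVD yields $\|\bY^*\|_{\mathrm{op}} = \sqrt{n}\,d_1^* = \sqrt{nq\lambda_1}$. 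For the noise part, writing $\bE = \bG\mathbf{\Sigma}^{1/2}$ with $\bG$ having i.i.d.\ standard normal entries, Condition \ref{cond3} gives $\|\bE\|_{\mathrm{op}} \le \sqrt{\gamma_u}\,\|\bG\|_{\mathrm{op}}$, and a standard Gaussian matrix concentration bound with slack $t=\sqrt{n}$ yields $\|\bG\|_{\mathrm{op}} \le 2\sqrt{n}+\sqrt{q}$, hence $\|\bE\|_{\mathrm{op}} \le 2\sqrt{\gamma_u}(\sqrt{n}+\sqrt{q})$, on an event of probability at least $1-\exp(-n/2)$.

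Combining the three estimates, the cross term is $\le 4\sqrt{\lambda_1\gamma_u}(\sqrt{n}+\sqrt{q})/\sqrt{nq}$, while the pure noise term $\|\bE\|^2_{\mathrm{op}}/(nq)$ is of order $\gamma_u(\sqrt{n}+\sqrt{q})^2/(nq)$, which is strictly lower order in the large-$n,q$ regime and can be absorbed into the leading constant; folding in the $\sqrt{2}/d_\lambda$ factor from Davis--Kahan produces the stated constant $4\sqrt{\lambda_1}\gamma_u/d_\lambda$ (after bounding $\sqrt{\gamma_u}\le\gamma_u$ where convenient). The main obstacle I anticipate is bookkeeping: tracking constants through the Davis--Kahan application and the Gaussian concentration slack so that the residual $\bE\bE^T$ term genuinely fits inside the stated leading constant, and invoking the version of Davis--Kahan that delivers a single-vector (rather than subspace) bound with the gap $d_\lambda$ rather than a one-sided gap.
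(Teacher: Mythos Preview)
Your approach is essentially the same as the paper's: the identical decomposition $\bM - \bM^* = (nq)^{-1}(\bY^*\bE^T + \bE\bY^{*T} + \bE\bE^T)$, the same Gaussian concentration bound $\|\bE\|_2 \lesssim \gamma_u(2\sqrt{n}+\sqrt{q})$ with probability $1-\exp(-n/2)$ (the paper packages this as Lemma~\ref{E}), and a Davis--Kahan--type eigenvector perturbation step. The only organizational difference is that the paper first records the Weyl bound $|\widehat{\lambda}_k-\lambda_k|\le (nq)^{-1}\|\Delta(\bY\bY^T)\|_2$ and then defers the eigenvector comparison to \cite[Lemma~6]{bahadori2016scalable}, whereas you invoke Davis--Kahan directly; the constant bookkeeping you flag as the main obstacle is exactly what the paper handles by absorbing the lower-order $\|\bE\|_2^2/(nq)$ term into an $o\big((\sqrt{n}+\sqrt{q})/\sqrt{nq}\big)$ remainder and then using ``for sufficiently large $n$.''
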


Proposition \ref{ZYYZ} provides a uniform convergence rate around the order of $(\sqrt{n} + \sqrt{q})/\sqrt{nq}$ for all the latent factors relative to the top-$r^*$ singular values with significant probability. The numerator indicates the magnitude of $\|\bE\|_2$, which is the largest singular value of the error matrix, while the denominator corresponds to the order of the top-$r^*$ singular values $\|\bX \bC_k^*\|_F$. In \citet{bunea2012joint}, it was shown by random matrix theory that $\|\bE\|_2$ is of the magnitude $(\sqrt{n} + \sqrt{q})$ for independent entries and we generalize this result to allow for correlated random errors. Our convergence rate is as fast as $n^{-1/2} + q^{-1/2}$ based on the current signal strength $\sqrt{nq}$ as we do not impose any sparsity assumption on the right singular vectors $\bv_k^*$. In view of the technical argument, the consistency of estimated latent factors can still be guaranteed as long as the signal strengths $\|\bX \bC_k^*\|_F$ are above the noise level. Moreover, the eigenvalue $\lambda_1$ and the eigengap $d_\lambda$ here are assumed to be around the constant level for simplicity under the low rank structure, and the convergence rates still hold even if they become divergent.


The results of Proposition \ref{ZYYZ} are the bases of our two-step procedure for estimating the left singular vectors in SESS, which show that the latent factors can be consistently recovered from a regular eigenvalue problem even if we do not force the solutions to lie in the column space of the design matrix $\bX$. The underlying reason is that the directions achieving the maximum variations remain close to the population ones when the perturbation is relatively small compared with the signals of the factors. Moreover, the penalized regression in the second step is indeed a relaxed projection of $\widehat{\bZ}_k$ to the column space of $\bX$, which further alleviates the issue of lacking subspace constraint in the first step. Then the proposed two-step procedure substantially simplifies the computational complexity compared with the nonconvex generalized sparse eigenvalue problem in \cite{bahadori2016scalable}, making it possible to decompose the original multi-response regression problem into several univariate response regressions. Extra benefits on tuning the sparsity parameter and the rank will be demonstrated through the subsequent theorems.

Note that a random vector $\bw=(w_1,\ldots,w_q)^{T}\in \mathbb{R}^{q}$ is said to be sub-Gaussian distributed if there exists some positive constant $k$ such that the marginal random variable $\bmm^{T}\bw$ satisfies $\mathbb{P}(|\bmm^{T}\bw|>t)\leq\exp(1-t^2/k^2)$ for any $t>0$ and any unit length vector $\bmm\in \mathbb{R}^{q}$. Its second moment matrix is defined as $\mathbb{E}(\bw^{T}\bw)$. Since the Gaussian assumption of the random error vector is not essential for our method, the following corollary generalizes the results of Proposition \ref{ZYYZ} to sub-Gaussian errors. It guarantees that the subsequent theoretical results can also hold for sub-Gaussian errors after some constant adjustment by applying the same technical arguments. 

\begin{corollary}
\label{nds}
Suppose that the rows of the random error matrix $\bE$ in \eqref{mod1} are independent sub-Gaussian vectors with a common second moment matrix $\mathbf{\Sigma}^{\star}$, whose eigenvalues are bounded from above. Then under Condition \ref{cond2}, for sufficiently large $n$ and some constant $\gamma_u'$, it holds that with probability at least $1 - \exp(-\frac{n}{2})$ for any $k$, $1 \leq k \leq r^*$,
\begin{align*}
\frac{1}{\sqrt{n}}\|\widehat{\bZ}_k - \bX\bu^*_k\|_2 \leq \frac{4 \sqrt{\lambda_1} \gamma_u'} {d_\lambda} \left(\frac{\sqrt{n} + \sqrt{q}}{\sqrt{nq}} \right).
\end{align*}
\end{corollary}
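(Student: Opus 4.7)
The plan is to show that Corollary \ref{nds} follows from the same argument as Proposition \ref{ZYYZ}, with only the one probabilistic step that invokes Gaussianity replaced by its sub-Gaussian analogue. First I would revisit the proof of Proposition \ref{ZYYZ} and pinpoint where the normality assumption is actually used. The decomposition
$(nq)^{-1}\bY\bY^T = (nq)^{-1}\bX\bC^*\bC^{*T}\bX^T + (nq)^{-1}(\bX\bC^*\bE^T + \bE\bC^{*T}\bX^T) + (nq)^{-1}\bE\bE^T$
is purely algebraic, and the ensuing Davis--Kahan / $\sin\Theta$-type perturbation bound that compares $\widehat{\bZ}_k$ with $\bX\bu_k^*$ in terms of the spectral norm of the perturbation and the eigengap $d_\lambda$ is deterministic. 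Thus the Gaussian assumption is only needed to control $\|\bE\|_2$ (and, through Cauchy--Schwarz, the cross term $\|\bX\bC^*\bE^T\|_2$, which is in turn bounded by $\|\bX\bC^*\|_2\|\bE\|_2$).

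The key step is therefore to prove the sub-Gaussian operator-norm bound
\[
  \|\bE\|_2 \leq \gamma_u' (\sqrt{n} + \sqrt{q})
\]
with probability at least $1 - \exp(-n/2)$ for some constant $\gamma_u'$ depending only on the sub-Gaussian parameter $k$ of the rows of $\bE$ and on the upper bound of the eigenvalues of $\mathbf{\Sigma}^{\star}$. I would obtain this through a standard $\varepsilon$-net argument: take a $1/4$-net $\mathcal{N}$ of the unit sphere $S^{q-1}$ with $|\mathcal{N}|\leq 9^q$, observe that for each fixed $\bv\in\mathcal{N}$ the vector $\bE\bv$ has i.i.d.\ sub-Gaussian coordinates with sub-Gaussian norm bounded by a multiple of $\|\mathbf{\Sigma}^\star\|^{1/2}$, apply the Hanson--Wright or Bernstein inequality to $\|\bE\bv\|_2^2$, and take a union bound. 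This yields the desired tail with the constant in the exponent tunable, so the $\exp(-n/2)$ probability can be preserved by enlarging $\gamma_u'$ if necessary.

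Once this replacement bound is in hand, the remainder of the proof of Proposition \ref{ZYYZ} goes through verbatim: the perturbation of $(nq)^{-1}\bY\bY^T$ around $(nq)^{-1}\bX\bC^*\bC^{*T}\bX^T$ is still at most a constant multiple of $\sqrt{\lambda_1}\gamma_u'(\sqrt{n}+\sqrt{q})/\sqrt{nq}$, the eigengap bound $\lambda_k-\lambda_{k+1}\geq d_\lambda$ from Condition \ref{cond2} still applies, and the sign convention for $\widehat{\bZ}_k$ discussed before Proposition \ref{ZYYZ} is unchanged. The final inequality in the corollary then follows with the advertised constant.

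The main obstacle, and really the only nontrivial one, is making sure the tail bound on $\|\bE\|_2$ is produced with an exponent at least $n/2$ (rather than merely $cn$ for some small $c$), since this is what is quoted in the statement. This is a matter of carefully choosing the constant in front of $(\sqrt{n}+\sqrt{q})$ in the net argument—taking a sufficiently large $\gamma_u'$ gives enough slack to absorb both the covering-number factor $9^q$ and the requirement that the probability be at least $1-\exp(-n/2)$. All other steps are bookkeeping.
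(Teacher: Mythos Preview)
Your proposal is correct and follows essentially the same strategy as the paper: isolate the single probabilistic input---a high-probability operator-norm bound $\|\bE\|_2 \le \gamma_u'(\sqrt{n}+\sqrt{q})$---and then rerun the deterministic perturbation argument from Proposition~\ref{ZYYZ} verbatim. The only cosmetic difference is that the paper packages the sub-Gaussian operator-norm bound as a separate lemma (Lemma~\ref{Es}) obtained by quoting Vershynin's covariance-estimation result (Theorem~5.39 and Lemma~5.36 in \cite{Eldar2012}) rather than writing out the $\varepsilon$-net argument directly, and handles the probability constant by choosing $t=\sqrt{n+2\log 2}$ to absorb a factor of $2$ in front of the exponential; your proposal to enlarge $\gamma_u'$ accomplishes the same thing.
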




With the estimated latent factors $\widehat{\bZ}_k$, our second step is to recover the sparse left singular vectors through penalized regressions. However, this problem can be somewhat between model selection and sparse recovery \citep{candes2005,candes2006,lv2009} since the residual vector converges to zero when regressing $\widehat{\bZ}_k$ on $\bX$ with the true coefficient vector $\bu^*_k$. As the solution $\widehat{\bu}_{k}$ of the Lasso is the same as that of the scaled Lasso with $\omega_k = \widehat{\sigma}_k \omega_{0}$, the following theorem guarantees the estimation accuracy of SESS.


\begin{theorem}[Consistency of sequential estimation]\label{theo1} 
Suppose that Conditions \ref{cond2}-\ref{cond1} hold and $\omega_k = \widetilde{C} (\frac{\sqrt{n} + \sqrt{q}}{\sqrt{nq}}) (\frac{\xi +1}{\xi -1})$ for any constant $\widetilde{C} > \frac{4 \sqrt{\lambda_1} \gamma_u} {d_\lambda}$. Then for sufficiently large $n$, with probability at least $1- \exp(-\frac{n}{2})$, the following statements hold simultaneously for any $k$, $1 \leq k \leq r^*$,
\begin{align*}
& \qquad \ \|\widehat{\bu}_{k} - \bu_{k}^* \|_{2} \leq C_{u}\sqrt{s} \Big(\frac{\sqrt{n} + \sqrt{q}}{\sqrt{nq}}\Big) , \ \frac{1}{\sqrt{n}} \|\bX\widehat{\bu}_{k} - \bX \bu^*_{k}\|_2 \leq \widetilde{C}_{u}\sqrt{s} \Big(\frac{\sqrt{n} + \sqrt{q}}{\sqrt{nq}}\Big), \\
& \quad \frac{1}{\sqrt{q}} \|\widehat{\bv}_{k} - \bv^*_{k}\|_{2} \leq \ C_{v} \Big(\frac{\sqrt{n} + \sqrt{q}}{\sqrt{nq}}\Big) , \ \ \frac{1}{\sqrt{q}} \|\widehat{\bC}_k - \bC^*_k \|_F \leq (V C_{u} + U C_{v}) \sqrt{s} \Big(\frac{\sqrt{n} + \sqrt{q}}{\sqrt{nq}}\Big),  \\
& \|\widehat{\bZ}_{k} \widehat{\bv}_{k}^T - \bX  \bC^*_k \|_F  \leq (V \widetilde{C} + C_{v}) (\sqrt{n} + \sqrt{q}), \ \|\bX \widehat{\bC}_k - \bX  \bC^*_k \|_F  \leq (V \widetilde{C}_{u} + C_{v}) \sqrt{s} (\sqrt{n} + \sqrt{q}),
\end{align*}
where $s = \max_{k = 1}^{r^*} |S_k|$ is the maximum sparsity level, $C_{u} = 2 \widetilde{C} \xi /\{(\xi - 1) F_2\}$, $\widetilde{C}_{u} = 2 \widetilde{C} \xi / \{(\xi - 1) \sqrt{F_1}\}$, and $C_{v} = (4\lambda_1/d_{\lambda} + 2)\gamma_u$ are positive constants.
\end{theorem}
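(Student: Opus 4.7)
The plan is to reduce the sequential estimation problem to a single-response Lasso analysis in which Proposition \ref{ZYYZ} plays the role of the noise bound, and then assemble the remaining bounds via elementary bilinear decompositions. All statements are made on the high-probability event of Proposition \ref{ZYYZ}, on which I also use the standard random matrix inequality $\|\bE\|_2 \leq 2\gamma_u(\sqrt{n}+\sqrt{q})$ that underlies the proof of that proposition.

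First I control the effective noise in the Lasso regression of $\widehat{\bZ}_k$ on $\bX$. Since $\|\bx_j\|_2 = \sqrt{n}$ for every column of $\bX$, Cauchy-Schwarz together with Proposition \ref{ZYYZ} gives
\[
\|\bX^T(\widehat{\bZ}_k - \bX\bu_k^*)/n\|_{\infty} \leq \frac{\|\widehat{\bZ}_k - \bX\bu_k^*\|_2}{\sqrt{n}} < \widetilde{C}\,\frac{\sqrt{n} + \sqrt{q}}{\sqrt{nq}} = \frac{\xi-1}{\xi+1}\,\omega_k.
\]
Combined with the KKT conditions of the Lasso, this yields the cone membership $\widehat{\bu}_k - \bu_k^* \in \mathcal{G}_-(\xi, S_k)$ and the deterministic inequality $\|\bP(\widehat{\bu}_k - \bu_k^*)\|_{\infty} \leq 2\xi\omega_k/(\xi+1)$. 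Condition \ref{cond1} then applies: the $\ell_2$ bound on $\widehat{\bu}_k - \bu_k^*$ follows from $\|\bu\|_2 \leq \sqrt{s}\|\bP\bu\|_\infty/F_2$ on the cone, and the prediction bound from the quadratic inequality $\|\bX\bu\|_2^2 \leq n\|\bu\|_1\|\bP\bu\|_{\infty}$ together with the $\ell_1$ bound supplied by $F_1$. Tracking the constants reproduces the claimed $C_u$ and $\widetilde{C}_u$.

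For the right singular vector, identity \eqref{v} gives $\bv_k^* = n^{-1}(\bX\bC^*)^T \bZ_k^*$, so
\[
\widehat{\bv}_k - \bv_k^* = n^{-1}(\bX\bC^*)^T(\widehat{\bZ}_k - \bZ_k^*) + n^{-1}\bE^T\widehat{\bZ}_k.
\]
I control the first summand using $\|n^{-1/2}\bX\bC^*\|_2 = \sqrt{q\lambda_1}$ (from \eqref{lambda}) together with the Proposition \ref{ZYYZ} bound on $\|\widehat{\bZ}_k - \bZ_k^*\|_2$, and the second using $\|\bE\|_2 \leq 2\gamma_u(\sqrt{n}+\sqrt{q})$ paired with $\|\widehat{\bZ}_k\|_2 = \sqrt{n}$ from Algorithm \ref{Algo}. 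Dividing by $\sqrt{q}$ and collecting constants reproduces $C_v = (4\lambda_1/d_\lambda + 2)\gamma_u$. The four remaining bounds follow from the bilinear identity
\[
\bu\bv^T - \bu^*\bv^{*T} = (\bu - \bu^*)\bv^T + \bu^*(\bv - \bv^*)^T,
\]
applied to $\widehat{\bu}_k\widehat{\bv}_k^T - \bu_k^*\bv_k^{*T}$ and to $\widehat{\bZ}_k\widehat{\bv}_k^T - \bZ_k^*\bv_k^{*T}$, combined with the Frobenius identity $\|\bu\bw^T\|_F = \|\bu\|_2\|\bw\|_2$, Condition \ref{conduv}'s bounds $\|\bu_k^*\|_2 \leq U$ and $\|\bv_k^*\|_2 \leq V\sqrt{q}$, the elementary consequence $\|\widehat{\bv}_k\|_2 \leq V\sqrt{q}(1+o(1))$, and the identity $\|\bZ_k^*\|_2 = \sqrt{n}$ implied by the orthogonality constraint in \eqref{eq:C*1}. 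The triangle inequality then delivers all four claimed product-type bounds.

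I expect the main technical obstacle to lie in the cone membership step. The Lasso ``noise" $\widehat{\bZ}_k - \bX\bu_k^*$ is neither independent of $\bX$ nor a fresh sub-Gaussian vector, so the customary concentration argument producing a rate of order $\sqrt{\log p/n}$ does not directly apply. The column-norm constraint $\|\bx_j\|_2 = \sqrt{n}$ rescues the argument: it allows a purely deterministic Cauchy-Schwarz step to dominate $\|\bX^T(\widehat{\bZ}_k - \bX\bu_k^*)/n\|_\infty$ by $\|\widehat{\bZ}_k - \bX\bu_k^*\|_2/\sqrt{n}$ and hence by the Proposition \ref{ZYYZ} rate, which is precisely compatible with the prescribed choice of $\omega_k$ and avoids any $\sqrt{\log p}$ factor.
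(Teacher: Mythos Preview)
Your proposal is correct and follows essentially the same route as the paper's proof: the deterministic Cauchy--Schwarz bound $\|\bX^T(\widehat{\bZ}_k-\bX\bu_k^*)/n\|_\infty \leq n^{-1/2}\|\widehat{\bZ}_k-\bX\bu_k^*\|_2$ feeding into the Ye--Zhang cone analysis, the decomposition $\widehat{\bv}_k-\bv_k^*=n^{-1}\bY^{*T}(\widehat{\bZ}_k-\bZ_k^*)+n^{-1}\bE^T\widehat{\bZ}_k$, and the bilinear splitting for the product bounds. The only cosmetic differences are that the paper uses the alternative bilinear split $(\bu^*-\widehat{\bu})\bv^{*T}+\widehat{\bu}(\bv^*-\widehat{\bv})^T$ (so it bounds $\|\widehat{\bu}\|_2$ rather than $\|\widehat{\bv}\|_2$) and derives the prediction bound via the KKT inequality of \cite{sun2012scaled} rather than your H\"older step $\|\bX\bu\|_2^2\leq n\|\bu\|_1\|\bP\bu\|_\infty$; both choices lead to the same constants.
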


Theorem \ref{theo1} presents a uniform convergence rate of the order $\sqrt{s}(\sqrt{n} + \sqrt{q})/\sqrt{nq}$ for the left singular vectors and the unit rank matrices corresponding to the top-$r^*$ singular values with significant probability. Compared with the uniform convergence rate in Proposition \ref{ZYYZ} and that of the right singular vectors, there is an extra term $\sqrt{s}$ reflecting the price we pay for estimating the nonzero entries in $\bu_{k}^*$. Interestingly, here we do not observe a $\log p$ term that typically exists in high-dimensional regression problems. The $\log p$ term is used to be induced by the penalization parameter of a magnitude no smaller than the maximum spurious correlation $\|n^{-1}\bX^T \bveps\|_{\infty}$ ($\bveps$ denotes a general random error vector) to suppress the noise variables and exclude them from the selected model. By contrast, as the columns of $\bX$ are standardized to have a common $L_2$-norm $\sqrt{n}$, the corresponding maximum spurious correlation in our setup would be
\begin{align}\label{ome}
n^{-1}\|\bX^T (\widehat{\bZ}_k - \bX\bu^*_k)\|_\infty \leq n^{-1/2}\|\widehat{\bZ}_k - \bX\bu^*_k\|_2 = O_{\mathbb{P}}(n^{-1/2} + q^{-1/2}),
\end{align}
which is independent of the dimensionality $p$. Therefore, we can set the penalization level $\omega_k$ according to the convergence rate of $\widehat{\bZ}_k$ established in Proposition \ref{ZYYZ}. It implies that the proposed method can be applicable to arbitrarily high dimensionality as long as the supports of $\bu_{k}^*$ are identifiable (Condition \ref{cond1} holds). Our theoretical results formally justify the numerical performance in Section \ref{sec4}, where both estimation and prediction accuracies maintain around the same level regardless of the increasing dimensionality.

It is worth noticing that under the same high-dimensional multi-response regression setup, the corresponding convergence rate established in \cite{bahadori2016scalable} was shown to be $\sqrt{s\log (pq)/n}$. Therefore, when the signals of factors are relatively strong such that $\sqrt{q} > \sqrt{n/\log (pq)}$, our estimation accuracy can be better since then the required penalization level $\omega_k$ is less than $\|n^{-1}\bX^T \bE\|_{\infty}$, which is around $\sqrt{\log (pq)/n}$. It is also interesting to note that the optimal error rate for estimating $\bX  \bC^*$ in terms of Frobenius norm is $\sqrt{q + s \log (p/s)}$ \citep{bunea2012joint} when considering unit rank matrix estimation with $r^* = 1$, which can be better than our corresponding rate $\sqrt{q} + \sqrt{n}$ in Theorem \ref{theo1} when $q < n$ and around the same order otherwise. It reveals the tradeoff between computational efficiency and estimation accuracy when the signals of factors are not that large. The last error bound in Theorem \ref{theo1} applies to out-of-sample prediction, which contains an extra term $\sqrt{s}$ since then we can only utilize the regression coefficient matrix instead of the estimated latent factor $\widehat{\bZ}_k$. Another advantage of SESS is that it only requires the tolerated sparsity level for model identification in Condition \ref{cond1} be larger than the number of nonzero components in each $\bu^*_k$ instead of that of the whole regression coefficient matrix $\bC^*$, which alleviates the correlation constraints on the design matrix $\bX$.
Furthermore, the specific choice of $\omega_k$ is derived from the requirement that $\omega_k (\xi -1)/(\xi +1) \geq \|\bX^T (\widehat{\bZ}_k - \bX\bu^*_k)\|_\infty/n$, similarly as in \cite{ye2010rate} and \cite{sun2012scaled}.
Based on \eqref{ome}, it suffices to guarantee that
\[ \Big(\frac{\xi -1}{\xi +1}\Big) \omega_k \geq \widetilde{C} \Big(\frac{\sqrt{n} + \sqrt{q}}{\sqrt{nq}}\Big) \geq \frac{\|\widehat{\bZ}_k - \bX \bu^*_k\|_2}{\sqrt{n}}, \]
which yields the choice of $\widetilde{C}$ in Theorem \ref{theo1}.
Moreover, since $\omega_{0} = \omega_k/\widehat{\sigma}_k$ and $\widehat{\sigma}_k = \|\widehat{\bZ}_k - \bX\widehat{\bu}_k\|_2/\sqrt{n}$ should be close to $\|\widehat{\bZ}_k - \bX \bu^*_k\|_2/\sqrt{n}$, we suggest a universal regularization parameter $\omega_{0}$ around the constant level. This is different from the choice of $\sqrt{2(\log p)/n}$ in \citet{sun2012scaled} for model selection. In our numerical studies, setting $\omega_{0} = 1$ gives satisfactory finite sample performance.

Based on the results of Theorem \ref{theo1}, the regression coefficient matrix $\bC^*$ can be accurately recovered once the rank is correctly identified. After decomposing the multi-response regression into univariate response regressions, the optimal rank can be tuned separately from the sparsity parameters in SESS. Moreover, since the true rank corresponds to the underlying number of latent factors, we propose the following BIC-type information criterion based on the estimated latent factors $\widehat{\bZ}_k$ and their factor loadings $\widehat{\bv}_k = n^{-1} \bY^T \widehat{\bZ}_k$. 


\begin{theorem}[Consistency of rank recovery]\label{theo2}
Suppose that Conditions \ref{cond2}-\ref{conduv} hold, ${r^*} (\frac{\log n}{\sqrt{n}})^{1/2} = o(1)$, $r \left( \frac{\sqrt{n} + \sqrt{q}}{\sqrt{nq}} \right)^{1/2} = o(1)$, and $\frac{\sqrt{n}}{\sqrt{q} \log n} = o (1)$. Then for sufficiently large $n$, the following information criterion
\begin{align}\label{rank}
\mathcal{C}({k}) = \sqrt{n} \log \mathcal{L}(k) + k \log n,
\end{align}
where $\mathcal{L}(k) = (nq)^{-1}\|\bY - \widehat{\bY}_{k}\|_F^2$ with $\widehat{\bY}_{k} = \sum_{j=1}^{k} \widehat{\bZ}_j \widehat{\bv}_j^T$, attains its minimum value when $k = r^*$ with probability at least $1 - 2\exp(-\frac{n}{2})$.
\end{theorem}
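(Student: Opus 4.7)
The plan is to transform $\mathcal{L}(k)$ into a function of the sample eigenvalues of $(nq)^{-1}\bY\bY^{T}$, establish tight concentration of those eigenvalues, and then separately verify $\mathcal{C}(k)>\mathcal{C}(r^{*})$ in the under-fitting and over-fitting regimes. The starting point is a spectral identity: since $\widehat{\bZ}_1/\sqrt n,\ldots,\widehat{\bZ}_k/\sqrt n$ are orthonormal eigenvectors of $(nq)^{-1}\bY\bY^{T}$ with eigenvalues $\widehat{\lambda}_1,\ldots,\widehat{\lambda}_k$ and $\widehat{\bv}_j=n^{-1}\bY^{T}\widehat{\bZ}_j$, the matrix $\widehat{\bY}_k=\sum_{j=1}^{k}\widehat{\bZ}_j\widehat{\bv}_j^{T}$ is precisely the orthogonal projection of $\bY$ onto $\mathrm{span}\{\widehat{\bZ}_j\}_{j=1}^{k}$. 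The eigenvalue equation gives $\|\widehat{\bv}_j\|_2^{2}=q\widehat{\lambda}_j$, hence
\begin{equation*}
\mathcal{L}(k) \;=\; (nq)^{-1}\|\bY\|_F^{2}-\sum_{j=1}^{k}\widehat{\lambda}_j,
\end{equation*}
so $\mathcal{L}(k)-\mathcal{L}(r^{*})$ is a signed partial sum of sample eigenvalues, and the whole argument reduces to comparing these partial sums with the $k\log n$ penalty after a log transformation.

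The bulk of the work is two-sided control of the $\widehat{\lambda}_j$. Decompose $(nq)^{-1}\bY\bY^{T}=\mathbf{M}_s+(nq)^{-1}(\bX\bC^{*}\bE^{T}+\bE\bC^{*T}\bX^{T}+\bE\bE^{T})$ with $\mathbf{M}_s=(nq)^{-1}\bX\bC^{*}\bC^{*T}\bX^{T}$, whose nonzero eigenvalues are $\lambda_1,\ldots,\lambda_{r^{*}}$. For $j\le r^{*}$, Weyl's inequality together with $\|\bX\bC^{*}\|_{\mathrm{op}}=\sqrt{nq\lambda_1}$ and the random-matrix bound $\|\bE\|_{\mathrm{op}}\le C(\sqrt n+\sqrt q)$ (valid with probability at least $1-e^{-n/2}$ by Gaussian concentration under Condition~\ref{cond3}) yield $|\widehat{\lambda}_j-\lambda_j|=o(1)$; Condition~\ref{cond2} then gives $\widehat{\lambda}_{r^{*}}\ge d_\lambda/2$. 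For $j=r^{*}+1$ a direct Weyl bound is too loose, so I instead apply Courant-Fischer with the test subspace $\mathrm{col}(\bX\bU^{*})$: for any $\bz\in\mathrm{col}(\bX\bU^{*})^{\perp}$ with $\|\bz\|=1$, both the signal $\bX\bC^{*}=\bX\bU^{*}\bV^{*T}$ and the cross term $\bX\bC^{*}\bE^{T}$ annihilate, yielding
\begin{equation*}
\widehat{\lambda}_{r^{*}+1}\;\le\;\|\bE\|_{\mathrm{op}}^{2}/(nq)\;=\;O(1/n+1/q)
\end{equation*}
on the same event. A standard chi-square bound for $(nq)^{-1}\|\bE\|_F^{2}\to \tr(\mathbf{\Sigma})/q\in[\gamma_l,\gamma_u]$ further confines $\mathcal{L}(r^{*})$ to an interval bounded away from $0$ and $\infty$. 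All these events hold jointly with probability at least $1-2e^{-n/2}$. The two regimes then fall out. For $1\le k<r^{*}$, $\mathcal{L}(k)-\mathcal{L}(r^{*})\ge(r^{*}-k)d_\lambda/2$ implies $\log\mathcal{L}(k)-\log\mathcal{L}(r^{*})\ge c_1(r^{*}-k)$, so $\mathcal{C}(k)-\mathcal{C}(r^{*})\ge(r^{*}-k)(c_1\sqrt n-\log n)>0$ by the rate condition $r^{*}(\log n/\sqrt n)^{1/2}=o(1)$. For $r^{*}<k\le r$, a first-order Taylor expansion of $\log$ together with $\mathcal{L}(r^{*})\ge \gamma_l/2$ gives
\begin{equation*}
\sqrt n\,[\log\mathcal{L}(r^{*})-\log\mathcal{L}(k)]\;\le\;(1+o(1))(k-r^{*})\sqrt n\,\widehat{\lambda}_{r^{*}+1}/\mathcal{L}(r^{*})\;=\;(k-r^{*})\,O(\sqrt n/q+1/\sqrt n),
\end{equation*}
which is $o((k-r^{*})\log n)$ under $\sqrt n/(q\log n)=o(1)$, yielding $\mathcal{C}(k)>\mathcal{C}(r^{*})$. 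Combining both cases, $\mathcal{C}$ is uniquely minimized at $k=r^{*}$ on the stated event.

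The main obstacle is the tight bound on $\widehat{\lambda}_{r^{*}+1}$. A direct Weyl bound on $(nq)^{-1}\bY\bY^{T}-\mathbf{M}_s$ would carry along the cross term $\bX\bC^{*}\bE^{T}/(nq)$, whose operator norm is only $O(1/\sqrt n+1/\sqrt q)$; multiplied by $\sqrt n$ inside $\mathcal{C}(k)$ this becomes $O(1+\sqrt{n/q})$, incompatible with the stated condition $\sqrt n/(q\log n)=o(1)$. The crucial observation is that the Courant-Fischer principle permits restricting the test vectors to $\mathrm{col}(\bX\bU^{*})^{\perp}$, on which both $\mathbf{M}_s$ and the cross term vanish identically, and the bound reduces cleanly to $\|\bE\|_{\mathrm{op}}^{2}/(nq)=O(1/n+1/q)$ via the random-matrix bound on $\|\bE\|_{\mathrm{op}}$. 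The only other delicate piece is coordinating this event with the one underlying Proposition~\ref{ZYYZ}, which accounts for the factor of $2$ in the stated failure probability $2e^{-n/2}$.
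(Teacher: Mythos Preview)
Your proposal is essentially correct and reaches the same conclusion, but the route differs from the paper's in two noteworthy respects.

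\textbf{Structure of the comparison.} The paper argues step by step that $\mathcal{C}(k-1)>\mathcal{C}(k)$ for $k\le r^{*}$ and $\mathcal{C}(k-1)<\mathcal{C}(k)$ for $k>r^{*}$, using the identity $\mathcal{L}(k-1)-\mathcal{L}(k)=\widehat{\lambda}_k$ together with the log sandwich $1-1/x\le\log x\le x-1$. You instead compare every $\mathcal{C}(k)$ directly with $\mathcal{C}(r^{*})$ via the closed form $\mathcal{L}(k)=(nq)^{-1}\|\bY\|_F^{2}-\sum_{j\le k}\widehat{\lambda}_j$. Both work; the paper's stepwise version is slightly cleaner in the under-fitting regime because it avoids bounding a log of a quantity that can grow with $r^{*}-k$.

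\textbf{Tail eigenvalue bound.} For $k>r^{*}$ the paper simply reuses the Weyl-type perturbation bound $\widehat{\lambda}_k\le C(\sqrt n+\sqrt q)/\sqrt{nq}$ from the proof of Proposition~\ref{ZYYZ}, which after multiplying by $\sqrt n$ gives $O(1+\sqrt{n/q})$; this is $o(\log n)$ precisely under the stated hypothesis $\sqrt n/(\sqrt q\log n)=o(1)$. Your Courant--Fischer argument with the test subspace $\mathrm{col}(\bX\bU^{*})$ is correct and yields the sharper $\widehat{\lambda}_{r^{*}+1}\le\|\bE\|_{\mathrm{op}}^{2}/(nq)=O(1/n+1/q)$, but your claim that the direct perturbation bound is ``incompatible'' stems from misreading the hypothesis as $\sqrt n/(q\log n)=o(1)$ rather than $\sqrt n/(\sqrt q\log n)=o(1)$. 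The coarser bound already suffices; your refinement would allow a weaker assumption than the theorem states, which is a nice observation but not needed here.

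\textbf{One small imprecision.} In the under-fitting case you write that $\mathcal{L}(k)-\mathcal{L}(r^{*})\ge(r^{*}-k)d_\lambda/2$ ``implies'' $\log\mathcal{L}(k)-\log\mathcal{L}(r^{*})\ge c_1(r^{*}-k)$. This linear-in-$(r^{*}-k)$ bound fails when $r^{*}-k$ is unbounded, since $\log(1+x)$ is concave. The fix is immediate: use instead the constant lower bound $\log(\mathcal{L}(k)/\mathcal{L}(r^{*}))\ge\log\bigl(1+d_\lambda/(2\mathcal{L}(r^{*}))\bigr)=:c_1>0$ together with $(r^{*}-k)\log n\le r^{*}\log n=o(\sqrt n)$ from the rate assumption, which gives $\mathcal{C}(k)-\mathcal{C}(r^{*})\ge c_1\sqrt n-(r^{*}-k)\log n>0$. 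The paper's stepwise formulation sidesteps this issue by bounding each ratio $\mathcal{L}(k-1)/\mathcal{L}(k)$ separately.
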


As pointed out in \citet{Fan2013Tuning parameter}, some power of the logarithmic factor of dimensionality is usually needed in the model complexity penalty to consistently identify the true model in high dimensions. But a BIC-type information criterion still applies here due to the separation of tuning procedures for the rank and the sparsity parameter. Compared with tuning both parameters via a GIC-type information criterion in \cite{bahadori2016scalable}, information criterion (\ref{rank}) enjoys much lower computational cost, as well as better statistical accuracy since the estimation of $\widehat{\bZ}_k$ bypasses the high-dimensional predictors so that their estimation error bounds do not involve $s$ or $\log p$. When the dimensionality $q$ is less than $n$, we can replace $n$ with $q$ in (\ref{rank}) so that the true rank can still be identified with significant probability by a similar technical argument.

\section{Simulation studies} \label{sec4}
In this section, we use simulated data to investigate the finite sample performance of SESS and compare it with four other methods: column-wise Lasso (Lasso), reduced rank regression (RRR), rank constrained group Lasso (RCGL), and sequential co-sparse factor regression (SeCURE). Lasso and RRR are two classical methods which generate sparse and low-rank estimates, respectively. RCGL yields a jointly row-sparse and low-rank estimate that achieves the optimal prediction error rate \citep{bunea2012joint}, while SeCURE sequentially estimates the sparse unit rank matrices to enjoy co-sparse structures in both left and right singular vectors \citep{mishra2017sequential}.

These methods were implemented as follows. Lasso was implemented by R package `lars' with the sparsity parameters tuned by BIC. RRR was implemented using R package `rrpack' with the rank tuned by the criterion of joint rank and row selection (JRRS) proposed in \citet{bunea2012joint}. RCGL was implemented by R package `rrpack' with the sparsity parameter and the rank tuned by JRRS. SeCURE was implemented using R package `secure' and tuned by BIC for both sparsity parameters and the rank. By contrast, the proposed method SESS utilized a universal sparsity parameter $\omega_0 = 1$ and its rank was chosen by the BIC-type information criterion (\ref{rank}). 

There are six performance measures in total for evaluating different methods. The first three measures are: the normalized prediction error (PE) $\|\bY_{\text{test}} - \bX_{\text{test}} \widehat{\bC}\|_F / \|\bY_{\text{test}}\|_F$ based on an independent test sample of size 10000, the normalized estimation error (EE) $\|\widehat{\bC} - \bC\|_F / \|\bC\|_F$, and the rank recovery error (RE) $|\mathrm{rank}(\widehat{\bC}) - \mathrm{rank}(\bC)|$. The fourth and fifth measures are the false positive rate (FPR) and the false negative rate (FNR) suggested in \cite{mishra2017sequential} to evaluate the results of variable selection for different layers, obtained by comparing the sparsity pattern of $(\widehat{\bu}_1,\dots,\widehat{\bu}_{r^*})$ to that of $(\bu^*_1,\dots,\bu^*_{r^*})$. These two measures do not apply to Lasso, RRR, and RCGL since they do not recover the latent factors. The last one is the averaged CPU time for obtaining the corresponding estimate on a PC with 16 GB RAM and Intel Core i7-8700 CPU (3.20 GHz). 

\subsection{Simulation example 1}\label{s1} 
We generated 100 data sets from multivariate regression model (\ref{mod1}) with similar setups as those in \citet{bahadori2016scalable}. For each data set, the rows of $\bX$ were sampled as i.i.d. copies from $N(\mathbf{0},\mathbf{\Sigma}_X)$ with $\mathbf{\Sigma}_X=(0.5^{|i-j|})_{p \times p}$. Similarly, the rows of $\bE$ are i.i.d. from $N(\mathbf{0}, \gamma \mathbf{\Sigma}_E)$ with $\mathbf{\Sigma}_E=(0.5^{|i-j|})_{q \times q}$ and $\gamma = 0.1$. The parameter matrix $\bC^*$ was generated as follows. First, we created a matrix $\bC$ with 90 non-zero entries and each non-zero entry was drawn independently from $N(0,1)$. Second, based on the singular value decomposition of $\bC = \bU\bS\bV^T$, we replaced the first $r$ diagonal components of $\bS$ by $100, 99, \dots ,101 - r$ and others by 0, yielding a jointly sparse and low-rank coefficient matrix $\bC^*$ with around 50 non-zero entries. We considered two different settings with $(n,q,r) = (100,200,3)$ and $(200,300,10)$, respectively. For both settings, the dimensionality $p$ can vary in $\{800,1500,2000\}$. 



\begin{table}
\begin{center} 
\caption{Means and standard errors (in parentheses) of different performance measures in Section \ref{s1}}
\resizebox{450pt}{280pt}{
\begin{tabular}{clccccc}
$p$           &  Method                 &  PE ($\times 10^{-2}$)  & EE ($\times 10^{-2}$)  & RE       & FNR      & FPR           \\
\hline
                    & \multicolumn{6}{c}{$n$ = 100, $q$ = 200, $r$ = 3}                      \\
\hline
                    &Lasso                   & 5.20 (0.01)       & 4.36 (0.02)      & 87.90 (1.12)   & ------
                         & ------        \\
                    &RRR                       & 29.87 (0.01)       & 27.37 (0.01)      & 0 (0)   & ------      & ------       \\
  $800$             &SESS                      & 2.66 (0.00)       & 0.78 (0.01)      & 0 (0)   & 0 (0)      & 0.06 (0.01)   \\
                    &RCGL                      & 2.71 (0.00)       & 1.34 (0.01)      & 0 (0)   & ------          & ------       \\
                    &SeCURE                    & 3.99 (0.02)       & 2.61 (0.08)      & 0 (0)   & 0.13 (0.08)      & 1.85 (0.25)   \\
\hline
                    &Lasso                   & 5.20 (0.01)       & 4.37 (0.02)      & 88.56 (1.25)   & ------       & ------        \\
                    &RRR                      & 28.13 (0.03)       & 26.65 (0.03)      & 0.02 (0.04)   & ------      & ------       \\
   $1500$           &SESS                      & 2.65 (0.00)       & 0.74 (0.00)      & 0 (0)   & 0 (0)      & 0.06 (0.01)   \\
                    &RCGL                      & 2.71 (0.00)       & 1.59 (0.01)      & 0 (0)   &  ------         & ------       \\
                    &SeCURE                    & 3.14 (0.03)       & 2.79 (0.06)      & 0 (0)   & 0 (0)      & 0.21 (0.01)   \\
\hline
                    &Lasso                   & 5.21 (0.01)       & 4.41 (0.02)      & 89.00 (2.97)  & ------       & ------        \\
                    &RRR                       & 28.29 (0.03)       & 27.21 (0.04)      &0.08 (0.02) & ------      & ------       \\
   $2000$           &SESS                      & 2.66 (0.00)       & 0.79 (0.01)      & 0 (0)   & 0 (0)      & 0.05 (0.01)    \\
                    &RCGL                      & 2.74 (0.00)       & 1.25 (0.01)      & 0 (0)   & ------          & ------        \\
                    &SeCURE                    & 3.12 (0.05)       & 3.11 (0.05)     & 0(0)   & 0 (0)      & 0.11 (0.01)    \\
\hline
                    & \multicolumn{6}{c}{$n$ = 200, $q$ = 300, $r$ = 10}                      \\
\hline
                    &Lasso                   & 4.70 (0.01)       & 4.25 (0.02)      & 110.05 (1.23)   & ------       & ------        \\
                    &RRR                      & 25.76 (0.03)       & 27.04 (0.04)      & 1.04 (0.04)   & ------      & ------       \\
  $800$             &SESS                      & 1.87 (0.00)       & 0.87 (0.00)      & 0 (0)    & 0 (0)      & 0.05 (0.01)  \\
                    &RCGL                      & 2.71 (0.00)       & 1.51 (0.00)      & 0 (0)    & ------          & ------      \\
                    &SeCURE                    & 3.22 (0.00)       & 5.25 (0.01)     & 0 (0)    & 0 (0)      & 0.12 (0.01)  \\
\hline
                    &Lasso                   & 4.69 (0.01)       & 4.25 (0.02)      & 123.95 (1.08)   & ------       & ------        \\
                    &RRR                       & 27.84 (0.03)       & 24.36 (0.03)      & 0 (0)   & ------     & ------      \\
  $1500$            &SESS                      & 1.89 (0.00)       & 0.85 (0.00)      & 0 (0)    & 0 (0)      & 0.02 (0.01)  \\
                    &RCGL                      & 2.67 (0.00)       & 1.62 (0.00)      & 0 (0)    & ------          &  ------     \\
                    &SeCURE                    & 4.33 (0.00)       & 6.01 (0.01)     & 0 (0)    & 0 (0)      & 0 (0)  \\
\hline
                    &Lasso                   & 4.70 (0.01)       & 4.23 (0.02)      & 120.06 (1.23)   & ------       & ------        \\
                    &RRR                       & 27.03 (0.03)       & 24.85 (0.03)      & 0 (0)   & ------     & ------       \\
  $2000$            &SESS                      & 1.89 (0.00)       & 0.82 (0.00)      & 0 (0)    & 0 (0)      & 0.02 (0.01)   \\
                    &RCGL                      & 2.70 (0.00)       & 1.54 (0.00)      & 0 (0)    & ------          & ------       \\
                    &SeCURE                    & 5.01 (0.00)       & 6.76 (0.01)     & 0 (0)    & 0.01 (0.01)      & 0.02 (0.01)   \\
\hline

\end{tabular}

}

\label{simu1}
\end{center}

\end{table}

\begin{figure} 
  \centering
  \subfigure[r = 3]{
    \label{label-a} 
    \includegraphics[width=2.7in]{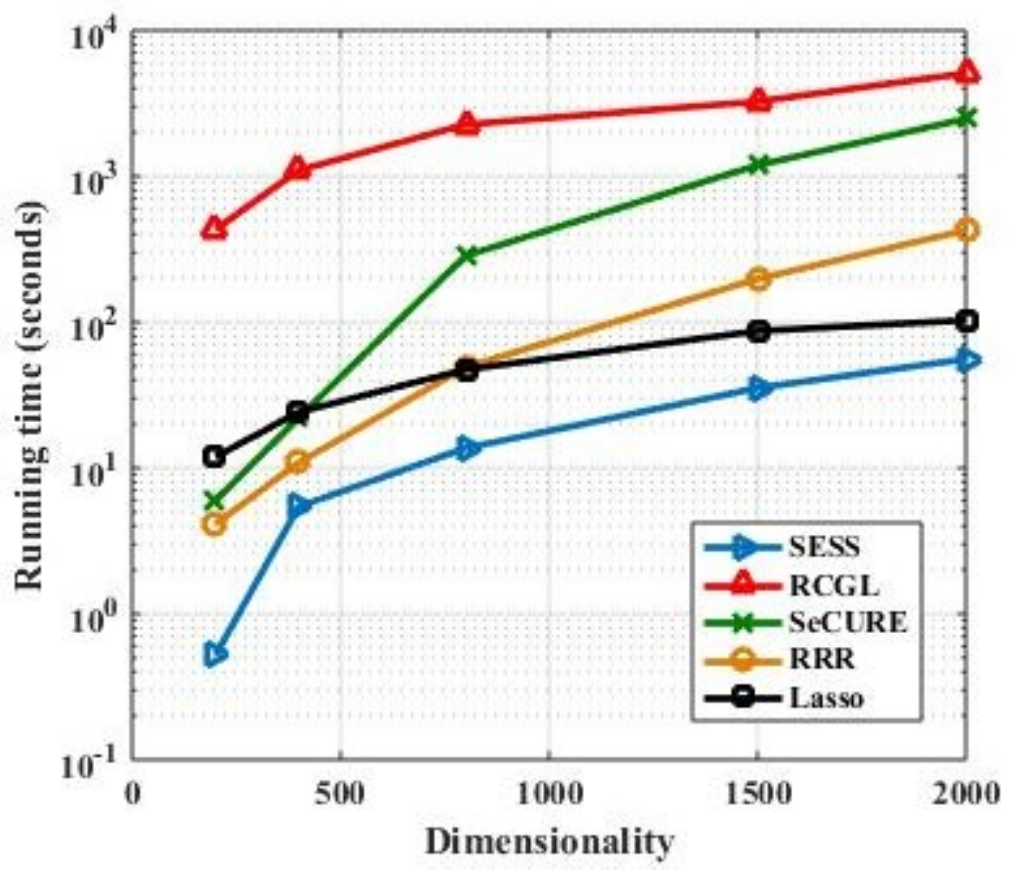}
  }
  \subfigure[r = 10]{
    \label{label-b} 
    \includegraphics[width=2.7in]{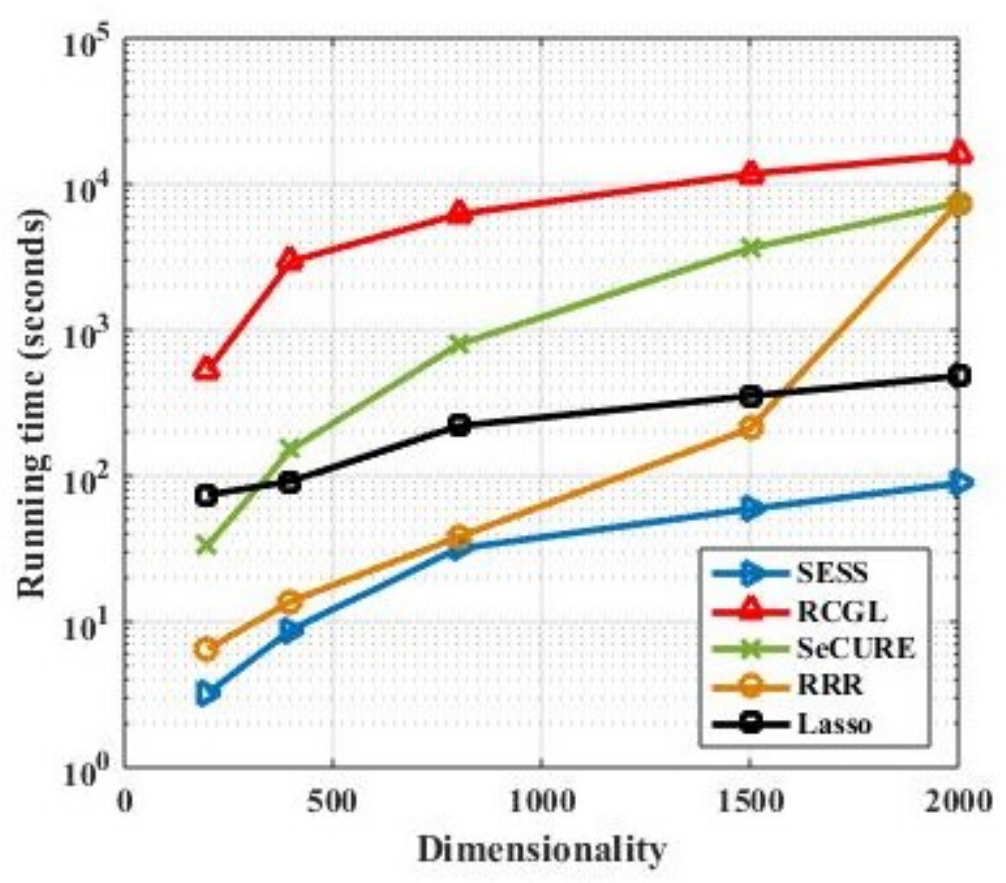}
  }
  \caption{CPU times of different methods}
  \label{fig:subfig} 
\end{figure}

\begin{figure} 
  \centering
  \subfigure[r = 3]{
    \label{label-a} 
    \includegraphics[width=2.5in]{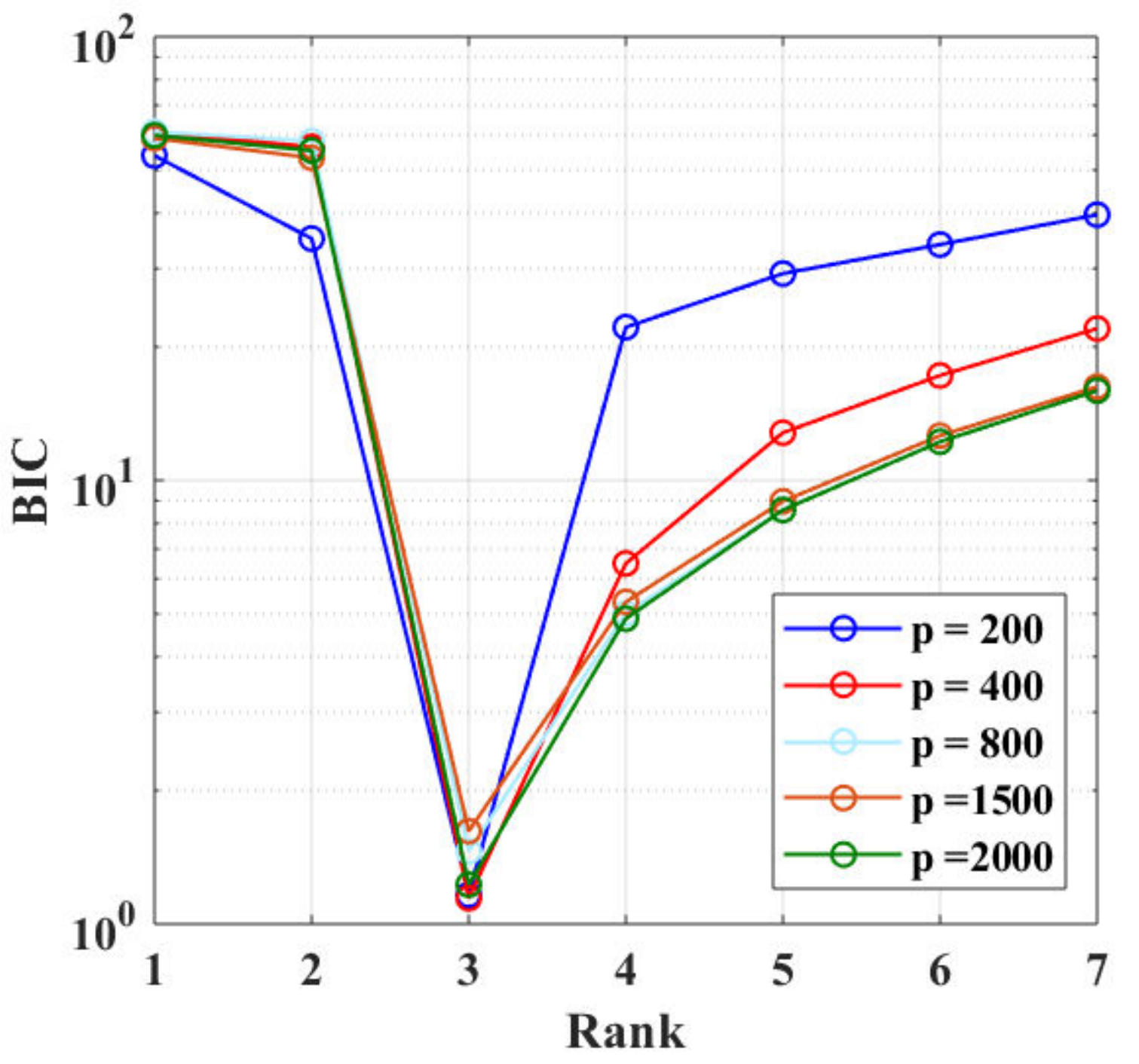}
  }
  \subfigure[r = 10]{
    \label{label-b} 
    \includegraphics[width=2.5in]{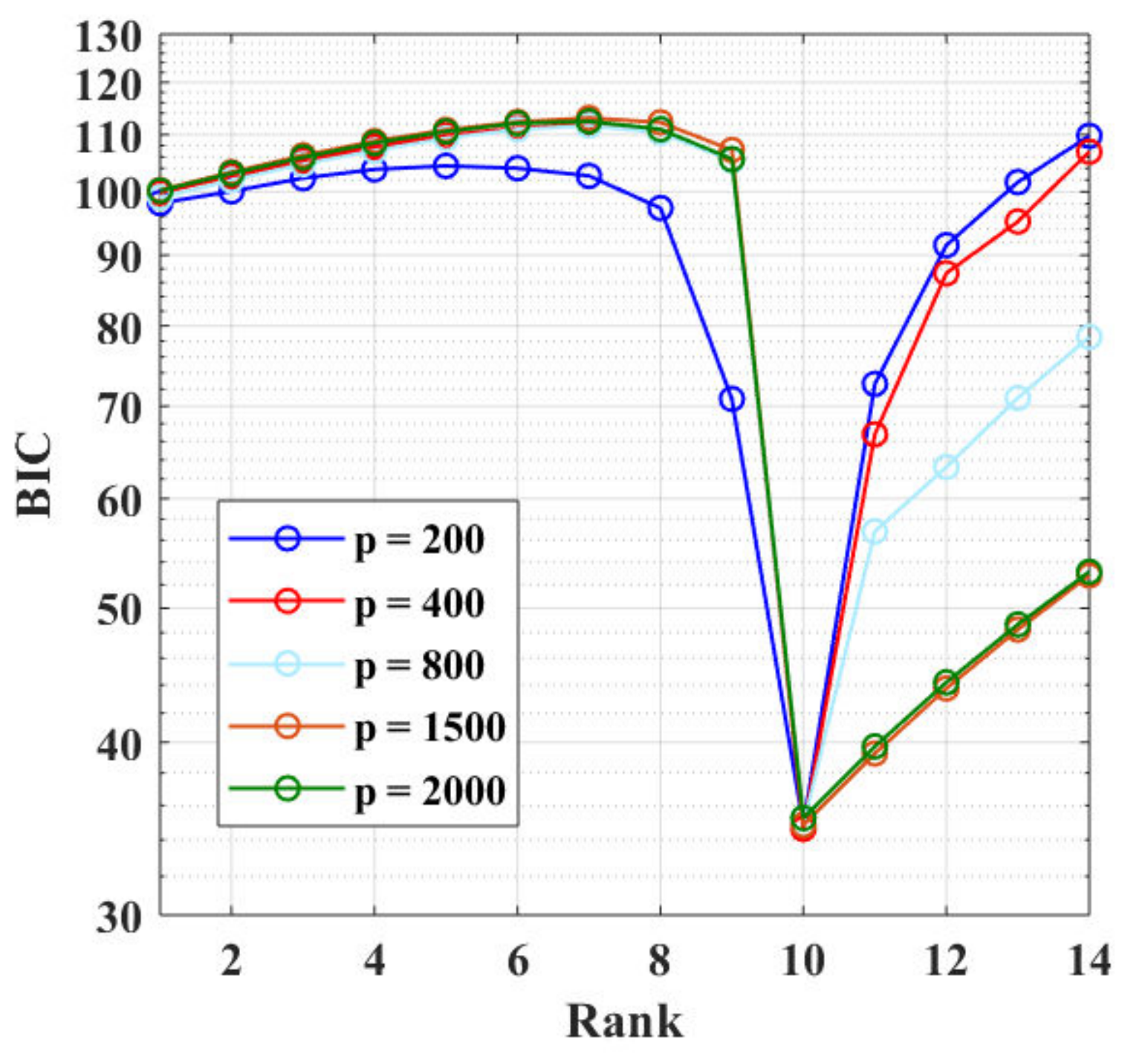}
  }
  \caption{BIC values in different steps for SESS}
  \label{p1} 
\end{figure}


Table \ref{simu1} summarizes the results of the performance measures except the CPU time. It is clear that the performance of SESS is among the best in terms of either prediction and estimation accuracies or variable selection under various settings. Although the computational efficiency of Lasso and RRR is good in view of Figure \ref{fig:subfig}, the Lasso can not recover and utilize the low rank structure, which in turn lowers its estimation and prediction accuracies, while the RRR suffers from the curse of dimensionality regardless of the correct identification of the rank. By contrast, SESS, RCGL, and SeCURE perform much better since they take advantage of the jointly low-rank and sparse structure.

Nevertheless, it can be seen from Figure \ref{fig:subfig} that SESS enjoys tremendous computational advantages over the other two comparable methods by increasing the speed for tens to hundreds of times, benefiting from its sequential formulation and tuning free property for the sparsity parameter. Specifically, when the dimensionality is 2000 and the true rank equals to 10, both RCGL and SeCURE need more than 2 hours to obtain the estimated coefficient matrix, while SESS costs less than 2 minutes in the same device. Furthermore, in view of Figure \ref{p1}, there are significant gaps between the BIC values of the true rank and other candidate ranks for SESS no matter how the dimensionality varies, which is due to the discrete nature of the rank. It makes SESS fairly easy to identify the correct one as it does not need to tune the sparsity level at the same time.

\subsection{Simulation example 2}\label{s2} 

In this second example, we generated 100 data sets and adopted the model setup similar to that in \cite{mishra2017sequential} with $(n, p, q, r) = (400, 500, 200, 3)$. Specifically, the true regression coefficient matrix $\bC^* = \sum_{j = 1}^{r^*} d_j^* \bu^*_j \bv_j^{*T}$ with the rank $r^* = 3$, $d_1^* = 60$, $d^*_2 = 30$, and $d^*_3 = 10$. The singular vectors were created as follows. We first generated $\breve{\bu}_1 = [ \rm{unif}(\mathcal{A}_u, s_1), \rm{rep}(0, \emph{p} - s_1)]^T$, $\breve{\bu}_2 = [ \rm rep(0, 5), \rm unif( \mathcal{A}_u, s_2), \rm rep(0, \emph{p} - 5 - s_2 ) ]^T$, and $\breve{\bu}_3 = [ \rm rep(0, 11), \rm unif( \mathcal{A}_u, s_3), \rm rep(0, \emph{p} - 11 - s_3 ) ]^T$, where $\mathcal{A}_u = \{1, -1\}$, $\rm unif(\mathcal{A}, \emph{b})$ denotes a $b$-dimensional vector whose entries are i.i.d. uniformly distributed on set $\mathcal{A}$, and $\rm rep(\alpha, \emph{k})$ denotes a $k$-dimensional vector whose entries are all equal to $\alpha$. Then we normalized them to have unit length, so that $\bu^*_j = \breve{\bu}_j /\|\breve{\bu}_j\|_2$, $j = 1, 2, 3$. Note that $s_k$ indicates the number of nonzero components in $\bu^*_k$ and can be various as displayed in Table \ref{simu3}. Similarly, we got $\breve{\bv}_1 = [ \rm unif( \mathcal{A}_v, 5), \rm rep(0, \emph{q} - 5) ]^T$, $\breve{\bv}_2 = [ \rm rep(0, 5), \rm unif( \mathcal{A}_v, 5), \rm rep(0, \emph{q} - 10) ]^T$, $\breve{\bv}_3 = [ \rm rep(0, 10), \rm unif( \mathcal{A}_v, 5), \rm rep(0, \emph{q} - 15) ]^T$ with $\rm\mathcal{A}_v = [-1, -0.3] \cup [0.3, 1]$, and $\bv^*_j = \breve{\bv}_j /\|\breve{\bv}_j\|_2$ for $j = 1, 2, 3$.

\begin{table}
	\begin{center} 
		\caption{Means and standard errors (in parentheses) of different performance measures in Section \ref{s2} } 
		
		\resizebox{450pt}{240pt}{
			\begin{tabular}{clccccc}
				Sparsity        &  Method        &  PE ($\times 10^{-2}$)  & EE ($\times 10^{-2}$)  & RE       & FNR       & FPR          \\    
\hline
				&Lasso                   & 21.68 (0.07)       & 10.08 (0.19)      & 95.40 (0.48)   & ------    & ------       \\
				$s_1 = 8$      &RRR                       & 39.56 (0.11)      & 42.88 (0.11)      & 0 (0)   & ------      &  ------      \\
				$s_2 = 9$       &SESS                      & 19.31 (0.04)       & 2.17 (0.01)      & 0 (0)   & 0 (0)      & 0.32 (0.01)       \\
				$s_3 = 9$       &RCGL                      & 19.39 (0.04)       & 2.35 (0.01)      & 0 (0)   & ------      & ------          \\
				&SeCURE                    & 19.44 (0.03)       & 2.91 (0.03)      & 0 (0)   & 0 (0)      & 0.36 (0.01)       \\
	
				\hline
				&Lasso                   & 22.12 (0.07)       & 10.98 (0.18)      & 97.00 (0.35)   & ------    & ------       \\
				$s_1 = 16$    &RRR                       & 40.44 (0.11)      & 44.05 (0.10)     & 0 (0)   & ------      &  ------      \\
				$s_2 = 18$    &SESS                      & 19.34 (0.04)       & 2.36 (0.01)      & 0 (0)   & 0 (0)      & 0.15 (0.01)       \\
				$s_3 = 18$    &RCGL                      & 19.55 (0.04)       & 2.79 (0.01)      & 0 (0)   & ------      & ------          \\
				&SeCURE                    & 19.38 (0.04)       & 2.38 (0.04)      & 0 (0)   & 0 (0)      & 0.16 (0.01)       \\
				\hline
				&Lasso                   & 22.16 (0.06)       & 12.91 (0.16)      & 98.40 (0.51)   & ------    & ------       \\
				$s_1 = 32$    &RRR                       & 41.52 (0.10)      & 47.00 (0.09)      & 0 (0)   & ------      &  ------      \\
				$s_2 = 36$    &SESS                      & 19.40 (0.05)       & 2.97 (0.01)      & 0 (0)   & 0 (0)      & 0.26 (0.01)       \\
		     	$s_3 = 36$    &RCGL                      & 19.47 (0.04)       & 3.33 (0.01)      & 0 (0)   & ------      & ------          \\
				&SeCURE                    & 19.48 (0.03)       & 3.31 (0.03)      & 0 (0)   & 0 (0)      & 0.31 (0.01)       \\
				\hline
				&Lasso                   & 22.67 (0.09)       & 16.60 (0.23)      & 93.82 (0.34)   & ------    & ------       \\
				$s_1 = 64$     &RRR                       & 41.73 (0.09)      & 43.19 (0.10)     & 0 (0)   & ------      &  ------      \\
				$s_2 = 72$     &SESS                      & 19.45 (0.05)       & 4.83 (0.04)      & 0 (0)   & 0 (0)      & 0.16 (0.01)       \\
				$s_3 = 72$     &RCGL                      & 19.99 (0.04)       & 6.68 (0.03)       & 0 (0)   & ------      & ------          \\
				&SeCURE                    & 19.14 (0.04)       & 4.90 (0.03)      & 0 (0)   & 0 (0)      & 0.16 (0.01)       \\
				\hline
				&Lasso                   & 30.12 (0.16)     & 31.91 (0.29)      & 92.40 (0.31)   & ------    & ------       \\
				$s_1 = 128$     &RRR                       & 40.13 (0.10)      & 44.34 (0.11)      & 0 (0)   & ------      &  ------      \\
				$s_2 = 144$     &SESS                      & 22.10 (0.09)      & 7.03 (0.10)      & 0 (0)   & 0.16 (0.09)      & 0.92 (0.06)       \\
				$s_3 = 144$     &RCGL                      & 23.44 (0.09)     & 8.25 (0.11)      & 0 (0)   & ------      & ------          \\
				&SeCURE                    & 23.92 (0.11)      & 8.56 (0.11)      & 0 (0)   & 0.15 (0.09)      & 0.95 (0.06)       \\
				\hline
				
			\end{tabular}
			
		}
		\label{simu3}
	\end{center}
	
\end{table}

Let $\bx$ follow the multivariate Gaussian distribution $N(\bzero, \mathbf{\Sigma}_X)$ with $\mathbf{\Sigma}_X = (0.5^{|i-j|})_{p \times p}$ and $\bx_1 = \bU^T \bx$ for some $\bU$ so that $\bx_1 \sim N(\bzero, \bI_{r^*})$. To generate the predictor matrix $\bX$, we first created $\bX_1 \in \mathbb{R}^{n \times r^*}$ by drawing $n$ random samples from $\bx_1 \sim N(\bzero, \bI_{r^*})$. Then based on $\bU^* = (\bu^*_1,\bu^*_2,\bu^*_3)$, we can find a $\bU^*_{\perp} \in \mathbb{R}^{p \times (p - r^*)}$ such that $\bP = (\bU^*, \bU^*_{\perp}) \in \mathbb{R}^{p \times p}$ and rank($\bP) = p$. Let $\bx_2 = \bU^{*T}_{\perp}\bx$ and $\bX_2 \in \mathbb{R}^{n \times (p-r^*)}$ was generated by drawing $n$ random samples from the conditional distribution of $\bx_2$ given $\bx_1$. Finally, the predictor matrix $\bX = (\bX_1, \bX_2)\bP^{-1}$. Moreover, we generated a non-Gaussian error matrix $\bE$ by first creating matrix $\breve{\bE}$ whose components are i.i.d. from a scaled $t$-distribution with $5$ degrees of freedom and unit variance, and then let $\bE = \sigma \breve{\bE} \mathbf{\Sigma}_E^{1/2}$ with $\mathbf{\Sigma}_E = (0.5^{|i-j|})_{q \times q}$. The noise level $\sigma$ is set so that the signal-to-noise ratio (SNR) defined as $\rm SNR$ $= \| d_{r^*}^* \bX \bu_{r^*}^* \bv_{r^*}^{*T} \|_F/$ $\| \bE \|_F$ equals to $0.75$. 



The results for different methods are summarized in Table \ref{simu3}. Similar to the first example, the performance of the jointly low-rank and sparse estimates including SESS, RCGL, and SeCURE is better than that of Lasso and RRR in terms of prediction and estimation accuracies and their performance is relatively stable regardless of the increasing number of nonzero components in $\bu^*_k$. Among them, SESS enjoys the highest computational efficiency similarly as in Section \ref{s1}. It also demonstrates the effectiveness of jointly low-rank and sparse estimation under some non-Gaussian errors.



\section{Application to stock short interest data} \label{real data} 
In this section, we will analyze the monthly stock short interest data set originally studied in \cite{rapach2016short}, available at Compustat (\url{http://www.hec.unil.ch/agoyal/}).
The raw data set reported short interest at the firm-level as the number of shares that were held short in a given firm. As short interest was shown in \cite{rapach2016short} to be the strongest predictor of aggregate stock returns, we will analyze the short interest influence networks among the firms and find the most influential firms through following vector auto-regression model with the maximal time lag $L$,
\begin{align*}
\by(t) = \sum_{i = 1}^{L} \bC_i^T \by(t-i) + \bepsilon(t).
\end{align*}
Here $\by(t) = \left[y_1(t), \dots, y_q(t) \right]^T \in \mathbb{R}^{q}$ consists of the short interests of $q$ firms at time $t$, $\bC_i \in \mathbb{R}^{q \times q}$ are the regression coefficient matrices, and $\bepsilon(t) \in \mathbb{R}^{q}$ denotes the random noise vector. By setting $\bx(t) = (\by(t-L)^T, \dots ,\by(t-1)^T)^T \in \mathbb{R}^{Lq}$ and $\bC = (\bC_1^T, \dots, \bC_L^T)^T$, the model can be rewritten as a multi-response regression model
\begin{align*}
\by(t) = \bC^T \bx(t) + \bepsilon(t).
\end{align*}


After the pre-processing, the data set consists of short interests of 3269 firms at the month level from January 1973 to December 2013, including 492 months in total. We set the maximal time lag $L = 5$ and the results are similar for larger lags. It yields a triple of $(n, p, q) = (487, 16345, 3269)$. Since both RCGL and SeCURE are no longer applicable due to the memory constraint in such large-scale data analysis, we report the performance of other methods in Section \ref{sec4}. By treating the first 366 samples as training data, we fit the multi-response regression model and then calculated the averaged $R^2$ statistics over $q$ firms for the significant latent factors and the averaged forecast error $(nq)^{-1}\| \bY - \bX \widehat{\bC } \|_F^2$ based on the remaining 121 testing samples. In view of the results summarized in Table \ref{tab:time}, SESS enjoys the lowest prediction error and identifies one significant latent factor. Its out-of-sample averaged $R^2$ statistic is as high as $15.20\%$, demonstrating its importance in forecasting the short interests. Moreover, there are 220 non-zero entries in the significant left singular vector with five entries much larger than others (at least 5 times larger). Correspondingly, the five most influential firms are two investment trust companies (Washington Prime Group Inc and Invesco) and three resource mining companies (Asanko Gold Inc, O'OKiep Copper, and Mesa Royalty Trust).

\begin{table}[!htb]
\caption{Results of different methods in Section \ref{real data}} 
\begin{center}
\begin{tabular}{cccccc}
Method              &  SESS        & Lasso   &RRR          \\ 
\hline
Estimated rank      &1             & 1019       &5            \\ 
Forecast error ($\times 10^{-3}$)    & 0.901        & 3.235     & 8.532       \\
$R^2$               &0.152         & 0.031       &0.008             \\
Time (minutes)      &4.950         & 45.623       & 64.125         \\
\end{tabular}
\label{tab:time}
\end{center}
\end{table}

Several researches reveal that market frictions and behavioral biases may cause price to deviate from fundamental value \citep{Miller 1977, Hong 1999} and that short sellers can exploit these situations since they are skilled at processing firm-specific information and information about future aggregate cash flows that is not reflected in current market prices \citep{Diether2009,rapach2016short}. Therefore, by applying our method to analyze the short interest of each company and their influence networks and forecast the behavior of short sellers, investors can make better judgments in response to market frictions to reasonably avoid certain risks.

\section{Discussion} \label{sec6}
In this paper, we have developed a new method SESS for high-dimensional multi-response regression, which recovers regression coefficient matrix and latent factors sequentially by converting the original problem into several univariate response regressions. Numerical studies demonstrate the statistical accuracy and high scalability of the proposed method. Our two-step sequential estimation procedure may be extended to deal with data containing measurement errors and outliers or more general model settings such as the generalized linear model, which will be interesting topics for future research.



\bibliographystyle{Chicago}

\newpage

\setcounter{page}{1}
\setcounter{section}{0}
\setcounter{equation}{0}

\renewcommand{\theequation}{A.\arabic{equation}}
\setcounter{equation}{0}

\bigskip
\begin{center}
{\large\bf Supplementary Material to ``Sequential scaled sparse factor regression''}

\bigskip

Zemin Zheng, Yang Li, Jie Wu and Yuchen Wang

\end{center}

%
%
%
%
\noindent This Supplementary Material presents the proofs for the theoretical results. \label{sec:app} 


\subsection*{Proof of Proposition ~\ref{ZYYZ}}
\smallskip
Note that $\bX\bu_k^*$ and $\widehat{\bZ}_k$ are the $k$th eigenvectors of the following two eigenvalue problems, respectively,
\begin{align}
\frac{1}{nq} \bY^* \bY^{*T} \bZ &= \lambda \bZ, \label{eig1}\\
\frac{1}{nq} \bY\bY^T \widehat{\bZ} &= \widehat{\lambda} \widehat{\bZ}. \label{eig2}
\end{align}
In order to show the uniform estimation error bound of $\widehat{\bZ}_k$, we will derive some bounds through random matrix theory (Lemma \ref{E}) to control the perturbation in the eigenvectors.

\smallskip

\textit{Step 1. Deriving the uniform bound on $\mid \widehat{\lambda}_k - \lambda_k \mid$}. Denote by $\Delta (\bY\bY)^T = \bY\bY^T - \bY^* \bY^{*T}$.
It follows directly from the eigenvalue perturbation theory that
\begin{align}\label{eig3}
\mid \widehat{\lambda}_k - \lambda_k \mid \leq \frac{1}{nq} \|\Delta (\bY\bY^T)\|_{2}.
\end{align}
Then we continue to analyze the term $\|\Delta (\bY\bY^T)\|_{2}$. By definition, we have
\[\Delta (\bY\bY^T) = \bY\bY^T - \bY^* \bY^{*T} = (\bY^* + \bE) (\bY^* + \bE)^T - \bY^* \bY^{*T} = \bY^* \bE^T + \bE\bY^{*T} + \bE\bE^T.\]
Applying the triangular inequality gives
\begin{align*}
\|\Delta (\bY\bY^T)\|_{2} & \leq \|\bY^* \bE^T\|_2 + \|\bE\bY^{*T}\|_2 + \|\bE\bE^T\|_2 \\ &\leq \|\bY^*\|_2 \| \bE^T \|_2 + \|\bE\|_2 \|\bY^{*T}\|_2 + \|\bE\|_2 \| \bE^T \|_2 \\
                               &= 2\|\bY^*\|_2 \| \bE \|_2 + \|\bE\|_2^2 \leq 2 \sqrt{nq\lambda_1} \gamma_u (2 \sqrt{n} + \sqrt{q}) + \gamma_u^2 (2 \sqrt{n} + \sqrt{q})^2, 
\end{align*}
where the last inequality holds with probability at least $1 - \exp(-\frac{n}{2})$ by setting $t = \sqrt{n}$ in Lemma \ref{E} such that $\|\bE\|_2 \leq \gamma_u (2\sqrt{n} +\sqrt{q})$.

\smallskip

Hereafter our discussion will be based on the event such that the upper bound on $\|\bE\|_2$ in Lemma \ref{E} holds and its probability is at least $1 - \exp(-\frac{n}{2})$. In view of inequality (\ref{eig3}), we get
\begin{align}\nonumber
\mid \widehat{\lambda}_k - \lambda_k \mid & \leq \frac{1}{nq} \|\Delta (\bY\bY^T)\|_{2} \leq 2 \sqrt{\lambda_1} \gamma_u (\frac{2 \sqrt{n} + \sqrt{q}}{\sqrt{nq}}) + \gamma_u^2 (\frac{2 \sqrt{n} + \sqrt{q}}{\sqrt{nq}})^2 \\\label{target}
                                          & < 4 \sqrt{\lambda_1} \gamma_u \frac{\sqrt{n} + \sqrt{q}}{\sqrt{nq}} + o(\frac{\sqrt{n} + \sqrt{q}}{\sqrt{nq}}).
\end{align}

\textit{Step 2. Deriving the uniform bound on $\|\widehat{\bZ}_k - \bX \bu_k^* \|_2/\sqrt{n}$}. Since the following argument applies to $\widehat{\bZ}_k$ and $\bX \bu_k^*$ with any fixed $k$, $1 \leq k \leq r^*$, we drop the index $k$ for notational clarity. Recall that $\|\widehat{\bZ}\|_2 = \|\bX \bu^* \|_2 = \sqrt{n}$. Then based on inequality (\ref{target}), applying the same argument as that in the proof of Lemma 6 \citep{bahadori2016scalable}, we can get
\begin{align*}
\frac{1}{\sqrt{n}} \|\widehat{\bZ} - \bX\bu^*\|_2 & < \frac{4 \sqrt{\lambda_1} \gamma_u} {d_\lambda} \cdot \frac{\sqrt{n} + \sqrt{q}}{\sqrt{nq}} + o(\frac{\sqrt{n} + \sqrt{q}}{\sqrt{nq}}).
\end{align*}
It is clear that the constant $\frac{4 \sqrt{\lambda_1} \gamma_u} {d_\lambda}$ is independent of index $k$. Then for sufficiently large $n$, we get
\begin{align}\label{ZZ}
\frac{1}{\sqrt{n}} \|\widehat{\bZ} - \bX\bu^*\|_2 & < \frac{4 \sqrt{\lambda_1} \gamma_u} {d_\lambda} \cdot \frac{\sqrt{n} + \sqrt{q}}{\sqrt{nq}}.
\end{align}
Thus, the above inequality provides a uniform estimation error bound on $\|\widehat{\bZ}_k - \bX \bu_k^* \|_2/\sqrt{n}$. It concludes the proof of Proposition \ref{ZYYZ}.

\smallskip

\subsection*{Proof of Corollary ~\ref{nds}}

\smallskip

The key point of this proof is to quantify the upper bound of $\|\bE\|_2$ under sub-Gaussian distribution. By setting $t = \sqrt{n+2\log 2}$ in Lemma \ref{Es}, it yields that with probability at least $1 - \exp(-\frac{n}{2})$,
\begin{align*}
\|\bE\|_2 \leq \gamma^{\star} (\sqrt{n} +\sqrt{q}+\sqrt{n+2\log 2}) \leq\gamma^{\star} (\sqrt{n} +\sqrt{q} +(1+\log2)\sqrt{n})\leq\gamma_u' (2\sqrt{n} +\sqrt{q}),
\end{align*}
where $\gamma_u'= (1+\log2)\gamma^{\star}\geq2(1+\log2)$ is some constant. Then applying the same argument as that in the last section (the proof of Proposition ~\ref{ZYYZ}) gives the results of Corollary ~\ref{nds}.

\smallskip

\subsection*{Proof of Theorem~\ref{theo1}}

\smallskip


The following proof is conditional on the event such that the results in Theorem \ref{ZYYZ} hold. Similar to the proof of Theorem \ref{ZYYZ}, since the following argument applies to any singular vector and unit rank matrix for $1 \leq k \leq r^*$, we drop the index $k$ for notational clarity. The bounds on the five quantities in Theorem~\ref{theo1} will be derived in three steps.
\smallskip

\textit{Step 1. Deriving the uniform bounds on $\|\widehat{\bu} - \bu^*\|_2$ and $\|\bX\widehat{\bu} - \bX \bu^*\|_2/\sqrt{n}$}. For $1 \leq j \leq p$, denote by $\bX_{j}$ the $j$th column of $\bX$. Since the columns of $\bX$ are standardized to have a common $L_2$-norm $\sqrt{n}$, we get $\|\bX_{j}\|_{2} = \sqrt{n}$. Therefore, conditional on the event such that the results of Theorem \ref{ZYYZ} hold, we have 
\begin{align*} 
   \frac{1}{n} \|\bX^{T}(\widehat{\bZ} - \bX\bu^*)\|_{\infty} & = \max\limits_{1 \leq j\leq p} \frac{1}{n} \mid \bX^T_{j}(\widehat{\bZ} - \bX\bu^*)\mid \nonumber 
   \leq \max\limits_{1 \leq j\leq p} \frac{\|\bX_{j}\|_{2}}{\sqrt{n}} \cdot \frac{\|\widehat{\bZ} - \bX\bu^*\|_{2}}{\sqrt{n}} \\
   & = \frac{\|\widehat{\bZ} - \bX\bu^*\|_{2}}{\sqrt{n}} < \frac{4 \sqrt{\lambda_1} \gamma_u} {d_\lambda} \cdot \frac{\sqrt{n} + \sqrt{q}}{\sqrt{nq}} + o(\frac{\sqrt{n} + \sqrt{q}}{\sqrt{nq}}). 
\end{align*}
When $\omega = \widetilde{C} (\frac{\sqrt{n} + \sqrt{q}}{\sqrt{nq}}) (\frac{\xi +1}{\xi -1})$ with $\widetilde{C} > \frac{4 \sqrt{\lambda_1} \gamma_u} {d_\lambda}$, for sufficiently large $n$, it yields that 
\begin{align*}
 \frac{\|\bX^{T}(\widehat{\bZ} - \bX\bu^*)\|_{\infty}}{n} \leq \omega \frac{(\xi - 1)}{(\xi+1)}.
\end{align*}

Thus, by the same argument as the proof of Theorem 3 \citep{ye2010rate}, we can get 
\begin{align}\label{hi}
    & \|\widehat{\bu} - \bu^* \|_{1}  \leq  \frac{2 \xi s \omega}{(\xi + 1)F_1(\xi,S)}, \nonumber \\
    \|\widehat{\bu} - \bu^* \|_{2} & \leq  \frac{2 \xi s^{1/2} \omega}{(\xi + 1)F_2(\xi,S)} \leq C_{u}\sqrt{s} \Big(\frac{\sqrt{n} + \sqrt{q}}{\sqrt{nq}}\Big),
\end{align}
where $C_{u} = \frac{2 \widetilde{C} \xi }{(\xi - 1) F_2}$ and $F_2(\xi,S) \geq F_2$ under Condition \ref{cond1}. Further applying the triangular inequality and inequality (23) in \cite{sun2012scaled}, which is derived through the Karush$-$Kuhn$-$Tucker condition, gives
\begin{align*}
 & \frac{2}{n} \|\bX\widehat{\bu} - \bX\bu^*\|^2_2 \leq 2 \omega (\|\bu^*\|_1 - \|\widehat{\bu}\|_1 ) + \frac{2}{n}\|\bX^T (\widehat{\bZ} - \bX \bu^*)\|_\infty \cdot \| \bu^* - \widehat{\bu}\|_1 \\
                              & \leq 2 \omega \| \bu^* - \widehat{\bu}\|_1 \left( 1 + \frac{\xi - 1}{\xi+1} \right) = \frac{4 \omega \xi}{\xi+1} \| \widehat{\bu} - \bu^*\|_1 \leq \frac{8 \omega^2 \xi^2 s}{(\xi + 1)^2 F_1(\xi,S)}. 
\end{align*}

Under Condition \ref{cond1}, it follows that
\begin{align}\label{XU}
\frac{1}{\sqrt{n}} \|\bX\widehat{\bu} - \bX \bu^*\|_2 \leq \frac{2 \xi s^{1/2} \omega}{(\xi + 1) \sqrt{F_1(\xi,S)}} \leq \widetilde{C}_{u}\sqrt{s} \Big(\frac{\sqrt{n} + \sqrt{q}}{\sqrt{nq}}\Big),
\end{align}
where $\widetilde{C}_{u} = \frac{2 \widetilde{C} \xi }{(\xi - 1) \sqrt{F_1}}$.

\smallskip

\textit{Step 2. Deriving the uniform bound on $\|\widehat{\bv} - \bv^*\|_2/\sqrt{q}$}.
According to (\ref{eq:solu}), with the estimated latent factor $\widehat{\bZ}$, the corresponding right singular vector $\widehat{\bv}$ is estimated as
\begin{equation*}
\widehat{\bv} = \frac{1}{n}\bY^T \widehat{\bZ},
\end{equation*}
which is motivated by the intrinsic relationship between $\bZ^*$ and $\bv^*$ in the noiseless case that
\begin{equation*}
\bv^* = \frac{1}{n}\bY^{*T} \bZ^*.
\end{equation*}

Since $\bY = \bY^* + \bE$, we can analyze the difference between them as
\begin{align*}
\|\widehat{\bv} - \bv^{*}\|_2 & = \frac{1}{n} \|(\bY^* + \bE)^T \widehat{\bZ} - \bY^{*T} \bZ^* \|_2 \\
                    & \leq \frac{\|\bY^*\|_2 \cdot \|\widehat{\bZ} - \bZ^*\|_2}{n} + \frac{\|\bE\|_2 \cdot \|\widehat{\bZ}\|_2}{n}.
\end{align*}
It follows from $\|\bY^*\|_2 = \sqrt{nq \lambda_1}$, $\|\widehat{\bZ}\|_2 = \sqrt{n}$, $\|\bE\|_2 \leq \gamma_u (2\sqrt{n} +\sqrt{q})$, and inequality (\ref{ZZ}) that
\begin{align*}
\frac{1}{\sqrt{q}} \|\widehat{\bv} - \bv^{*}\|_2 & < \frac{4\lambda_1 \gamma_u}{d_{\lambda}} \Big(\frac{\sqrt{n} + \sqrt{q}}{\sqrt{nq}}\Big) + 2\gamma_u \Big(\frac{\sqrt{n} + \sqrt{q}}{\sqrt{nq}}\Big) + o\left(\frac{\sqrt{n} + \sqrt{q}}{\sqrt{nq}} \right)\\
& < ( \frac{4\lambda_1 \gamma_u}{d_{\lambda}} + 2\gamma_u) \Big(\frac{\sqrt{n} + \sqrt{q}}{\sqrt{nq}}\Big) + o\left(\frac{\sqrt{n} + \sqrt{q}}{\sqrt{nq}} \right).
\end{align*}

Thus, for sufficiently large $n$, we have
\begin{align}\label{e2} 
\frac{1}{\sqrt{q}} \|\widehat{\bv} - \bv^{*}\|_2  < C_{v} \Big(\frac{\sqrt{n} + \sqrt{q}}{\sqrt{nq}}\Big),
\end{align}
where $C_{v} = (4\lambda_1/d_{\lambda} + 2)\gamma_u$. 

\smallskip

\textit{Step 3. Deriving the uniform bounds on $\|\widehat{\bC} - \bC^*\|_2/\sqrt{q}$ and $\|\bX\widehat{\bC} - \bX\bC^*\|_2/\sqrt{nq}$}. By definitions of the unit rank matrices $\bC^*$ and $\widehat{\bC}$, we have
\[\bC^* - \widehat{\bC} = \bu^* \bv^{*T} - \widehat{\bu} \widehat{\bv}^T = (\bu^* - \widehat{\bu})\bv^{*T} + \widehat{\bu}(\bv^{*} - \widehat{\bv})^T.\]
Then it follows from Condition \ref{conduv} and the estimation error bounds (\ref{hi}) and (\ref{e2}) that
\begin{align*}
\frac{1}{\sqrt{q}}\|\bC^* - \widehat{\bC}\|_F &\leq \frac{1}{\sqrt{q}}\|(\bu^* - \widehat{\bu})\bv^{*T}\|_F + \frac{1}{\sqrt{q}}\|\widehat{\bu}(\bv^{*} - \widehat{\bv})^T\|_F \\
& = \frac{1}{\sqrt{q}} \|\bu^* - \widehat{\bu}\|_2 \cdot \|\bv^{*}\|_2 + \frac{1}{\sqrt{q}}\|\widehat{\bu}\|_2 \cdot \|\bv^{*} - \widehat{\bv}\|_2\\
& < V C_{u} \sqrt{s} \Big(\frac{\sqrt{n} + \sqrt{q}}{\sqrt{nq}}\Big) + U C_{v} \Big(\frac{\sqrt{n} + \sqrt{q}}{\sqrt{nq}}\Big) + o\left( \sqrt{s}\frac{\sqrt{n} + \sqrt{q}}{\sqrt{nq}} \right),
\end{align*}
where the last inequality utilizes the triangular inequality $\|\widehat{\bu}\|_2 \leq \|\bu^*\|_2 + \|\widehat{\bu} - \bu^*\|_2$. Thus, for sufficiently large $n$, we obtain
\begin{align*}
\frac{1}{\sqrt{q}}\|\widehat{\bC} - \bC^*\|_F < (V C_{u} + U C_{v}) \sqrt{s} \Big(\frac{\sqrt{n} + \sqrt{q}}{\sqrt{nq}}\Big).
\end{align*}

Similarly, for the prediction error bound of the unit rank matrix, with the inequality (\ref{XU}) and sufficiently large $n$, we have 
\begin{align*}
 \frac{1}{\sqrt{nq}} \|\bX(\bC^* - \widehat{\bC})\|_F &\leq \frac{1}{\sqrt{nq}}\|(\bX\bu^* - \bX\widehat{\bu})\bv^{*T}\|_F + \frac{1}{\sqrt{nq}}\|\bX\widehat{\bu}(\bv^{*} - \widehat{\bv})^T\|_F \nonumber\\
& = \frac{1}{\sqrt{nq}} \|\bX(\bu^* - \widehat{\bu})\|_2 \cdot \|\bv^{*}\|_2 + \frac{1}{\sqrt{nq}}\|\bX \widehat{\bu}\|_2 \cdot \|\bv^{*} - \widehat{\bv}\|_2 \nonumber\\
&< (V \widetilde{C}_{u} + C_{v}) \sqrt{s} \Big(\frac{\sqrt{n} + \sqrt{q}}{\sqrt{nq}}\Big). 
\end{align*}
By the same argument, we get
\begin{align}
 \frac{1}{\sqrt{nq}} \|\widehat{\bZ} \widehat{\bv}^T - \bX  \bC^*_k\|_F &\leq \frac{1}{\sqrt{nq}}\|(\widehat{\bZ} - \bZ^*)\bv^{*T}\|_F + \frac{1}{\sqrt{nq}}\|\widehat{\bZ}(\widehat{\bv} - \bv^{*})^T\|_F \nonumber\\
& = \frac{1}{\sqrt{nq}} \|\widehat{\bZ} - \bZ^*\|_2 \cdot \|\bv^{*}\|_2 + \frac{1}{\sqrt{nq}}\|\widehat{\bZ}\|_2 \cdot \|\widehat{\bv} - \bv^{*}\|_2 \nonumber\\
&< (V \widetilde{C} + C_{v}) \Big(\frac{\sqrt{n} + \sqrt{q}}{\sqrt{nq}}\Big). \label{Xu}
\end{align}
It completes the proof of Theorem \ref{theo1}.

\smallskip

\subsection*{Proof of Theorem~\ref{theo2}}

\smallskip

Before showing the results of Theorem \ref{theo2}, some preparations are needed. Since $\widehat{\bZ}_k$ is the eigenvector of equation (\ref{eig2}) with respect to the $k$th eigenvalue $\widehat{\lambda}_k$, we have $\frac{1}{nq} \bY\bY^T \widehat{\bZ}_k = \widehat{\lambda}_k \widehat{\bZ}_k$.
It follows that
\begin{align*}
\widehat{\lambda}_k = \frac{\widehat{\bZ}_k^T \bY\bY^T \widehat{\bZ}_k}{nq \widehat{\bZ}_k^T \widehat{\bZ}_k} = \frac{\widehat{\bZ}_k^T \bY\bY^T \widehat{\bZ}_k}{n^2 q}.
\end{align*}
Moreover, when tuning the rank, the corresponding right singular vector $\widehat{\bv}_k$ can be obtained by 
\begin{align*} 
\widehat{\bv}_k = \frac{1}{n} \bY^T \widehat{\bZ}_k.
\end{align*}

We first analyze the loss function $\mathcal{L}(k) = \frac{1}{nq}\|\bY - \widehat{\bY}_{k}\|_F^2$ and the information criterion $\mathcal{C}({k})$ in the $k$th step. Since $\widehat{\bY}_{k} = \sum_{j=1}^{k} \widehat{\bZ}_j \widehat{\bv}_j^T$, the amount of decrease in $\mathcal{L}(k)$ in the $k$th step satisfies that
\begin{align*}
\mathcal{L}(k - 1) - \mathcal{L}(k) & = \frac{1}{nq}\|\bY - \widehat{\bY}_{k - 1}\|_F^2 - \frac{1}{nq}\|\bY - \widehat{\bY}_{k}\|_F^2 \\ &= \frac{1}{nq}\|\bY - \widehat{\bY}_{k - 1}\|_F^2 - \frac{1}{nq}\|\bY - \widehat{\bY}_{k - 1} - \widehat{\bZ}_k \widehat{\bv}_k^{T}\|_F^2 \\
& = \frac{1}{nq}\left(2\left\langle \bY - \widehat{\bY}_{k - 1},  \widehat{\bZ}_k \widehat{\bv}_k^{T}\right\rangle - \|\widehat{\bZ}_k \widehat{\bv}_k^{T}\|_F^2 \right) =  \frac{1}{nq}\left(2\left\langle \bY , \widehat{\bZ}_k \widehat{\bv}_k^{T}\right\rangle - \|\widehat{\bZ}_k \widehat{\bv}_k^{T}\|_F^2 \right),
\end{align*}
where the last equality holds due to the orthogonality of different eigenvectors. 
Replacing $\widehat{\bv}_k$ with $\frac{1}{n} \bY^T \widehat{\bZ}_k$, we further have
\begin{align}\label{lam}
\mathcal{L}(k - 1) - \mathcal{L}(k) & = \frac{1}{nq} \left( \frac{2 \widehat{\bZ}_k^{T} \bY\bY^T \widehat{\bZ}_k}{n}
  - \frac{\widehat{\bZ}_k^{T} \bY\bY^T \widehat{\bZ}_k}{n}\right) = \frac{1}{n^2q} \widehat{\bZ}_k^{T} \bY\bY^T \widehat{\bZ}_k = \widehat{\lambda}_k.
\end{align}

Moreover, by the definition of $\mathcal{C}(k)$, we get $\mathcal{C}({k - 1}) - \mathcal{C}({k}) = \sqrt{n} \log (\mathcal{L}(k - 1)/ \mathcal{L}(k)) - \log n$. Both lower and upper bounds on $\log (\mathcal{L}(k - 1) / \mathcal{L}(k))$ can be provided 
by observing the fact that $1 - \frac{1}{x}\leq \log(x) \leq x-1 $ for $x >0$, so that
\begin{align}\label{log}
\frac{\mathcal{L}(k - 1) - \mathcal{L}(k)}{\mathcal{L}(k - 1)} \leq \log \left( \frac{\mathcal{L}(k - 1)}{\mathcal{L}(k)}\right) \leq \frac{\mathcal{L}(k - 1) - \mathcal{L}(k)}{\mathcal{L}(k)}.
\end{align}
Then the results of Theorem \ref{theo2} will be shown in two steps.

\smallskip

\textit{Step 1. We show that $\mathcal{C}({k - 1}) > \mathcal{C}({k})$ when $1 \leq k \leq r^*$}. According to the uniform estimation error bound of population eigenvalues in (\ref{target}), conditional on the event such that the results of Theorem \ref{ZYYZ} hold, for sufficiently large $n$, we have
\begin{align}\label{lamd}
\mid \widehat{\lambda}_k - \lambda_k \mid & < C \Big(\frac{\sqrt{n} + \sqrt{q}}{\sqrt{nq}}\Big) 
\end{align}
for the positive constant $C = 4 \sqrt{\lambda_1} \gamma_u$. Then for any $1 \leq k \leq r^*$, it follows from (\ref{lam}) that
\begin{align}\label{boundL}
\mathcal{L}(k - 1) - \mathcal{L}(k) = \widehat{\lambda}_k > \lambda_k - C \Big(\frac{\sqrt{n} + \sqrt{q}}{\sqrt{nq}}\Big).
\end{align}
Moreover, by the fact that $\bY = \sum_{j=1}^{r^*} \bX \bu_j^* \bv_j^{*T} + \bE$ and applying the triangular inequality, we have
\begin{align*}
\sqrt{\mathcal{L}(k - 1)} = \frac{1}{\sqrt{nq}}\|\bY - \widehat{\bY}_{k - 1}\|_F = \frac{1}{\sqrt{nq}} \big\|\sum_{j=1}^{r^*} \bX \bu_j^* \bv_j^{*T} + \bE - \sum_{j=1}^{k - 1} \widehat{\bZ}_j \widehat{\bv}_j^T \big\|_F \\
\leq \frac{1}{\sqrt{nq}}\Big(\sum_{j=1}^{k-1}\|\bX \bu_j^* \bv_j^{*T} - \widehat{\bZ}_j \widehat{\bv}_j^T\|_F + \sum_{j=k}^{r^*}\|\bX \bu_j^* \bv_j^{*T}\|_F + \|\bE\|_F \Big).
\end{align*}
We will bound the three terms on the right hand side successively.

For the first term, by the results of Theorem \ref{ZYYZ} and applying the same arguments as (\ref{e2}) and (\ref{Xu}) in the proof of Theorem \ref{theo1}, 
we can show that uniformly over $1 \leq j \leq k - 1$, for sufficiently large $n$,
\begin{align*} 
 \frac{1}{\sqrt{nq}}  & \|\bX \bu_j^* \bv_j^{*T} - \widehat{\bZ}_j \widehat{\bv}_j^T\|_F  < \widetilde{C}_z \Big(\frac{\sqrt{n} + \sqrt{q}}{\sqrt{nq}}\Big),
\end{align*}
where the positive constants $C_{v} = (4\lambda_1/d_{\lambda} + 2)\gamma_u$ and $\widetilde{C}_z = 4 \sqrt{\lambda_1} \gamma_{u} V /d_\lambda + C_v$. It gives that
\begin{align*}
\sum_{j=1}^{k-1} \frac{\|\bX \bu_j^* \bv_j^{*T} - \widehat{\bZ}_j \widehat{\bv}_j^T\|_F}{\sqrt{nq}} < (k - 1) \widetilde{C}_z \Big(\frac{\sqrt{n} + \sqrt{q}}{\sqrt{nq}}\Big). 
\end{align*}
For the second term, since $\|\bX \bu_j^* \bv_j^{*T}\|_F/\sqrt{nq} = \sqrt{\lambda}_j$, we have
\begin{align*}
\frac{1}{\sqrt{nq}}\sum_{j=k}^{r^*}\|\bX\bu_j^* \bv_j^{*T}\|_F = \sum_{j=k}^{r^*} \sqrt{\lambda}_j.
\end{align*}

Then we bound the last term. As the components of $\bE \mathbf{\Sigma}^{-1/2}$ are independent and identically distributed with the standard Gaussian distribution, given the tail bound for $\chi^2$ distribution \citep[Lemma 1]{laurent2000adaptive}, we have with probability at least $1 - \exp(-\frac{n}{2})$,
\begin{align*}
\|\bE \mathbf{\Sigma}^{-1/2}\|_F^2/nq  \leq 1 + \sqrt{\frac{2}{q}} + \frac{1}{q} < \left(1 + \frac{1}{\sqrt{q}} \ \right)^2.
\end{align*}
On the other hand, by Condition \ref{cond3}, we have
\begin{align*}
\|\bE \mathbf{\Sigma}^{-1/2}\|_F^2 \geq \|\bE\|_F^2 \lambda_{\min}^2(\mathbf{\Sigma}^{-1/2}) \geq \|\bE\|_F^2/\gamma_{u}^2.
\end{align*}
These two inequalities together yield
\begin{align*}
\|\bE\|_F/\sqrt{nq}  \leq \gamma_u \left(1 + \frac{1}{\sqrt{q}} \right).
\end{align*}

Combining the three bounds gives
\begin{align*}
\sqrt{\mathcal{L}(k - 1)} & < \sum_{j=k}^{r^*} \sqrt{\lambda}_j + \gamma_u \left(1 + \frac{1}{\sqrt{q}} \right)  +  (k - 1) \widetilde{C}_z \Big(\frac{\sqrt{n} + \sqrt{q}}{\sqrt{nq}}\Big) \\
& \leq (r^* - k + 1 + c_{\gamma}) \sqrt{\lambda}_k + (\gamma_u + \widetilde{C}_z r^*) \Big(\frac{\sqrt{n} + \sqrt{q}}{\sqrt{nq}}\Big),
\end{align*}
where the constant $c_{\gamma} = \gamma_u/ \sqrt{\lambda}_{r^*}$ so that $\gamma_u = c_{\gamma} \sqrt{\lambda}_{r^*} \leq c_{\gamma} \sqrt{\lambda}_k$. It holds with probability at least $1 - 2\exp(-\frac{n}{2})$ by applying the union bound to control the tail probability of the union of the two events. Our discussion will be conditioning on this new event hereafter. 

Together with inequality (\ref{boundL}), we can derive that
\begin{align*}
\sqrt{\frac{\mathcal{L}(k - 1) - \mathcal{L}(k)}{\mathcal{L}(k - 1)}} & > \frac{\sqrt{\lambda}_k + O\left(\frac{\sqrt{n} + \sqrt{q}}{\sqrt{nq}} \right)}{(r^* - k + 1 + c_{\gamma}) \sqrt{\lambda}_k + O\left(r^* \frac{\sqrt{n} + \sqrt{q}}{\sqrt{nq}} \right)}\\
& = \frac{1}{r^* - k + 1 + c_{\gamma}} + O\left(r^* \frac{\sqrt{n} + \sqrt{q}}{\sqrt{nq}} \ \right).
\end{align*}
Then it follows from the assumption ${r^*} (\frac{\log n}{\sqrt{n}})^{1/2} = o(1)$, which implies ${r^*} (\frac{\sqrt{n} + \sqrt{q}}{\sqrt{nq}})^{1/2} = o(1)$, that for sufficiently large $n$,
\begin{align*}
\sqrt{\frac{\mathcal{L}(k - 1) - \mathcal{L}(k)}{\mathcal{L}(k - 1)}} >  \frac{1}{r^* - k + 1 + c_{\gamma}}.
\end{align*}

In view of (\ref{log}), it leads to
\begin{align*}
& \mathcal{C}({k - 1}) - \mathcal{C}({k}) = \sqrt{n} \log (\mathcal{L}(k - 1)/ \mathcal{L}(k)) - \log n\\
& > \frac{\sqrt{n}}{(r^* - k + 1 + c_{\gamma})^2} - \log n \geq \frac{\sqrt{n}}{(r^*+ c_{\gamma})^2} - \log n > 0.
\end{align*}
It means that the information criterion $\mathcal{C}(k)$ will keep decreasing when the sequential step $k$ is no more than the true rank $r^*$.

\smallskip

\textit{Step 2. We show that $\mathcal{C}({k - 1}) < \mathcal{C}({k})$ when $k > r^*$}. Since $\lambda_k = 0$ when $k > r^*$, by the uniform estimation error bound (\ref{lamd}), we have
\begin{align}\label{sig}
\mathcal{L}(k - 1) - \mathcal{L}(k) = \widehat{\lambda}_k < C \Big(\frac{\sqrt{n} + \sqrt{q}}{\sqrt{nq}}\Big).
\end{align}
Moreover, by the triangular inequality we have
\begin{align*}
\sqrt{\mathcal{L}(k)} = \frac{1}{\sqrt{nq}}\|\bY - \widehat{\bY}_{k}\|_F \geq \frac{1}{\sqrt{nq}}\Big(\|\bE\|_F - \sum_{j=1}^{r^*}\|\bX \bu_j^* \bv_j^{*T} - \widehat{\bZ}_j \widehat{\bv}_j^T\|_F - \sum_{j = r^* + 1}^{k} \|\widehat{\bZ}_j \widehat{\bv}_j^T\|_F\Big).
\end{align*}

Applying similar arguments as in \textit{Step 1}, the three terms on the right hand side can be bounded as
\begin{align*} 
 & \frac{1}{\sqrt{nq}} \|\bE\|_F \geq \gamma_l \left(1 - \frac{2}{\sqrt{q}}  \right),\\ 
 & \frac{1}{\sqrt{nq}} \sum_{j=1}^{r^*}\|\bX \bu_j^* \bv_j^{*T} - \widehat{\bZ}_j \widehat{\bv}_j^T\|_F < \widetilde{C}_z r^* \left( \frac{\sqrt{n} + \sqrt{q}}{\sqrt{nq}} \right),\\
 & \frac{1}{\sqrt{nq}} \sum_{j = r^* + 1}^{k} \|\widehat{\bZ}_j \widehat{\bv}_j^T\|_F = \sum_{j = r^* + 1}^{k} \sqrt{\widehat{\lambda}}_j < (r - r^*) \sqrt{C} \left( \frac{\sqrt{n} + \sqrt{q}}{\sqrt{nq}} \right)^{1/2}.
\end{align*}
Combining the above results yields the following upper bound
\begin{align*}
\sqrt{\frac{\mathcal{L}(k - 1) - \mathcal{L}(k)}{\mathcal{L}(k)}} & < \frac{ \sqrt{C} \left( \frac{\sqrt{n} + \sqrt{q}}{\sqrt{nq}}\right)^{1/2}}{\gamma_l - (\widetilde{C}_z + 2\gamma_l) \left( r^* \frac{\sqrt{n} + \sqrt{q}}{\sqrt{nq}} \right) - (r - r^*) \sqrt{C} \left( \frac{\sqrt{n} + \sqrt{q}}{\sqrt{nq}} \right)^{1/2}}.
\end{align*}

Since $r^* \left( \frac{\sqrt{n} + \sqrt{q}}{\sqrt{nq}} \right)^{1/2} = o(1)$ and $r \left( \frac{\sqrt{n} + \sqrt{q}}{\sqrt{nq}} \right)^{1/2} = o(1)$, for sufficiently large $n$, we have
\begin{align}\label{kbr}
\sqrt{\frac{\mathcal{L}(k - 1) - \mathcal{L}(k)}{\mathcal{L}(k)}} & < \frac{\sqrt{C}}{\gamma_l} \left( \frac{\sqrt{n} + \sqrt{q}}{\sqrt{nq}} \right)^{1/2}.
\end{align}
In view of (\ref{log}), it leads to
\begin{align*}
\mathcal{C}({k - 1}) & - \mathcal{C}({k}) = \sqrt{n} \log (\mathcal{L}(k - 1)/ \mathcal{L}(k)) - \log n < \frac{C}{\gamma_l^2} \left( \frac{\sqrt{n} + \sqrt{q}}{\sqrt{q}} \right) - \log n < 0, 
\end{align*}
where the last inequality is immediate from the assumption that $\frac{\sqrt{n}}{\sqrt{q} \log n} = o (1)$.

Combining the established results in the aforementioned two steps gives that $\mathcal{C}({k})$ will attain its minimum value when $k = r^*$ with probability at least $ 1 - 2\exp(-\frac{n}{2})$ for sufficiently large $n$, which concludes the proof of Theorem \ref{theo2}.

\subsection*{Lemmas and their proofs}

\smallskip

\begin{lemma}\label{E}
Under Condition \ref{cond3}, the $n \times q$ random matrix $\bE = (\be_1, \ldots, \be_n )^T$ with rows $\be_i$  i.i.d. $\sim \mathcal{N}(\mathbf{0}, \mathbf{\Sigma})$ satisfies that for any $t > 0$, with probability at least $1- \exp(-\frac{t^2}{2})$,
\begin{align*}
\|\bE\|_{2} & \leq \gamma_u (\sqrt{n} + \sqrt{q} + t). 
\end{align*}

\label{lem:conc}
\end{lemma}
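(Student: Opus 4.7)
\subsection*{Proof proposal for Lemma~\ref{E}}

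The plan is to reduce the problem to a matrix with i.i.d.\ standard Gaussian entries and then invoke a classical non-asymptotic singular value bound. First, since the rows $\be_i$ are i.i.d.\ $\mathcal{N}(\mathbf{0}, \mathbf{\Sigma})$, I would write
\[
\bE = \bG \, \mathbf{\Sigma}^{1/2},
\]
where $\bG$ is an $n \times q$ matrix whose entries are i.i.d.\ $\mathcal{N}(0,1)$. Then by submultiplicativity of the operator norm,
\[
\|\bE\|_2 \le \|\bG\|_2 \cdot \|\mathbf{\Sigma}^{1/2}\|_2 \le \sqrt{\gamma_u}\, \|\bG\|_2,
\]
where the bound on $\|\mathbf{\Sigma}^{1/2}\|_2$ follows from Condition~\ref{cond3}.

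Next, I would invoke the standard Gordon/Davidson--Szarek concentration bound for the largest singular value of a Gaussian matrix, which is recorded in \citet{rudelson2010non}: for any $t > 0$,
\[
\mathbb{P}\bigl( \|\bG\|_2 > \sqrt{n} + \sqrt{q} + t \bigr) \le \exp\!\left(-\tfrac{t^2}{2}\right).
\]
This is a direct consequence of Gaussian concentration applied to the $1$-Lipschitz function $\bG \mapsto \|\bG\|_2$, combined with the fact that $\mathbb{E}\|\bG\|_2 \le \sqrt{n} + \sqrt{q}$ (Gordon's theorem).

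Combining the two displays yields, with probability at least $1 - \exp(-t^2/2)$,
\[
\|\bE\|_2 \le \sqrt{\gamma_u}\,(\sqrt{n} + \sqrt{q} + t) \le \gamma_u (\sqrt{n} + \sqrt{q} + t),
\]
where the last inequality uses (without loss of generality) $\gamma_u \ge 1$; otherwise we simply enlarge the constant in Condition~\ref{cond3} to be at least one, which affects nothing downstream. I do not anticipate a serious obstacle here: the only care required is in tracking the constant so that the statement of the lemma matches the way $\gamma_u$ is used as an operator-norm bound elsewhere in the paper, and in noting that the one-sided (rather than two-sided) concentration bound gives the factor $\exp(-t^2/2)$ rather than $2\exp(-t^2/2)$.
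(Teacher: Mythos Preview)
Your proposal is correct and essentially identical to the paper's own proof: both whiten $\bE$ via $\mathbf{\Sigma}^{1/2}$ to reduce to an i.i.d.\ standard Gaussian matrix, then invoke the Gordon-type bound $\mathbb{E}\|\bG\|_2 \le \sqrt{n}+\sqrt{q}$ together with Gaussian concentration (the paper cites \citet{rudelson2010non} and \citet[Lemma~3]{bunea2011optimal} for the same two ingredients). The only cosmetic difference is that the paper passes directly from $\|\mathbf{\Sigma}^{1/2}\|_2$ to $\gamma_u$ without isolating the intermediate $\sqrt{\gamma_u}$ step or the implicit $\gamma_u\ge 1$ assumption, which you make explicit.
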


\begin{proof}[Proof of Lemma~\ref{E}]
First of all, for any $1 \leq i \leq n$, we have 
\begin{align*}
\mathrm{E}(\mathbf{\Sigma}^{-1/2} \be_i ) & = \mathbf{0}, \\
\mathrm{Cov} (\mathbf{\Sigma}^{-1/2} \be_i) & = \mathbf{\Sigma}^{-1/2} \mathbf{\Sigma} \mathbf{\Sigma}^{-1/2} = \mathbf{I},
\end{align*}
where $I$ denotes an identity matrix. Hence, $\mathbf{\Sigma}^{-1/2} \bE^T = \left(\mathbf{\Sigma}^{-1/2} \be_1, \ldots, \mathbf{\Sigma}^{-1/2} \be_n \right)$ is a $q \times n$ matrix with independent zero mean and unit variance entries. Standard random matrix theory \citep{rudelson2010non} gives that $\mathrm{E} \left(\| \mathbf{\Sigma}^{-1/2}\bE^T \|_2 \right) \leq \sqrt{n} + \sqrt{q}$. Further applying \citet[Lemma 3]{bunea2011optimal} gives
\begin{align*}
\mathbb{P} \left\{ \| \mathbf{\Sigma}^{-1/2}\bE^T \|_2 \geq \sqrt{n} + \sqrt{q} + t \right \} \leq \exp(-\frac{t^2}{2}) 
\end{align*}
for any $t > 0$. Under Condition \ref{cond3}, we have with probability at least $1- \exp(-\frac{t^2}{2})$, 
\begin{align*}
& \|\bE\|_2 = \|\bE^T\|_2 = \|\mathbf{\Sigma}^{1/2} \mathbf{\Sigma}^{-1/2} \bE^T\|_2 \leq \|\mathbf{\Sigma}^{1/2}\|_2 \cdot \|\mathbf{\Sigma}^{-1/2}\bE^T\|_2 \leq \gamma_u (\sqrt{n} + \sqrt{q} + t).
\end{align*}
It yields the result of Lemma \ref{E}.
\end{proof}


\begin{lemma}\label{Es} 
Suppose that the rows of the $n \times q$ random matrix $\bE = (\be_1, \ldots, \be_n )^T$ are independent sub-Gaussian random vectors with a common second moment matrix $\mathbf{\Sigma}^{\star}$, whose eigenvalues are bounded from above by some positive constant $\gamma_u^{\star}$. Then it holds that for any $t > 0$, with probability at least $1- 2\exp(-\frac{t^2}{2})$,
\begin{align*}
\|\bE\|_{2}  \leq \gamma^{\star} (\sqrt{n} + \sqrt{q} + t),
\end{align*}
where $\gamma^{\star}\geq2$ is a constant.
\end{lemma}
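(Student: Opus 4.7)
The plan is to adapt the standard $\epsilon$-net argument for the operator norm of a random matrix with independent sub-Gaussian rows, in parallel with the Gaussian argument used for Lemma~\ref{E}. The key observation is that although $\mathbf{\Sigma}^{\star}$ need not be invertible (so we cannot simply ``isotropize'' the rows by left-multiplying by $\mathbf{\Sigma}^{\star-1/2}$ as in the Gaussian case), the boundedness of the spectrum of $\mathbf{\Sigma}^{\star}$ together with the sub-Gaussianity hypothesis is enough to control the sub-Gaussian norm of every one-dimensional marginal of each $\be_i$.

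First, I would fix unit vectors $\bu \in S^{n-1}$ and $\bv \in S^{q-1}$ and examine the bilinear form $\bu^T \bE \bv = \sum_{i=1}^n u_i (\be_i^T \bv)$. The paper's definition of sub-Gaussianity, combined with the fact that $\mathbf{\Sigma}^{\star}$ has eigenvalues bounded by $\gamma_u^{\star}$, implies that $\be_i^T \bv$ is sub-Gaussian with norm bounded by a constant $K$ depending only on the sub-Gaussian parameter $k$ and on $\gamma_u^{\star}$. Since the $\be_i$ are independent and $\|\bu\|_2 = 1$, the sum $\bu^T \bE \bv$ is a weighted sum of independent sub-Gaussians and is itself sub-Gaussian with norm at most $K$. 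A standard sub-Gaussian tail inequality then yields
\[
\mathbb{P}\bigl(|\bu^T \bE \bv| > s\bigr) \leq 2\exp\!\left(-\frac{s^2}{2K^2}\right).
\]

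Second, I would discretize via $1/4$-nets $\mathcal{N}_n \subset S^{n-1}$ and $\mathcal{N}_q \subset S^{q-1}$ of cardinalities at most $9^n$ and $9^q$, respectively. The standard comparison gives $\|\bE\|_2 \leq 2\sup_{\bu \in \mathcal{N}_n,\,\bv \in \mathcal{N}_q}|\bu^T \bE \bv|$, so a union bound over the net produces
\[
\mathbb{P}\bigl(\|\bE\|_2 > 2s\bigr) \leq 2\cdot 9^{n+q}\exp\!\left(-\frac{s^2}{2K^2}\right).
\]
Setting $s = K\sqrt{2\log 9}\,(\sqrt{n}+\sqrt{q}) + K t$ makes the exponent exceed $(n+q)\log 9$ by exactly $t^2/2$, so the right-hand side is bounded by $2\exp(-t^2/2)$. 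Unwinding the choice gives $\|\bE\|_2 \leq \gamma^{\star}(\sqrt{n}+\sqrt{q}+t)$ with $\gamma^{\star} := \max\{2K\sqrt{2\log 9},\,2K,\,2\}$, which is the claimed bound and satisfies $\gamma^{\star} \geq 2$.

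The main obstacle will be the constant tracking in the first step: converting the paper's definition $\mathbb{P}(|\bmm^T \bw| > t) \leq \exp(1 - t^2/k^2)$ into a clean moment-generating-function bound that can be summed across independent rows while still giving the tail factor $2\exp(-s^2/(2K^2))$ needed for the union bound to land at probability $2\exp(-t^2/2)$. Once the sub-Gaussian norm of $\bu^T \bE \bv$ is pinned down, the $\epsilon$-net discretization and the union bound are routine and closely mirror the Gaussian proof of Lemma~\ref{E}.
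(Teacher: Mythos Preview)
Your proposal is correct and follows a genuinely different route from the paper. The paper does not build the $\epsilon$-net argument from scratch; instead it invokes Vershynin's sample-covariance concentration result (Theorem~5.39 in \cite{Eldar2012}) to obtain $\|\frac{1}{n}\bE^T\bE - \mathbf{\Sigma}^{\star}\|_2 \leq \max(\delta,\delta^2)$ with $\delta = C_1\sqrt{q/n} + t_1/\sqrt{n}$, then uses the eigenvalue bound on $\mathbf{\Sigma}^{\star}$ and a triangle inequality to replace $\mathbf{\Sigma}^{\star}$ by $\bI$, and finally converts this into a bound on $\|\bE\|_2$ via Lemma~5.36 of the same reference. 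Your direct $\epsilon$-net on the bilinear form $\bu^T\bE\bv$ is more elementary and self-contained: it bypasses the covariance-matrix detour, and in fact does not even need the upper bound $\gamma_u^{\star}$ on the spectrum of $\mathbf{\Sigma}^{\star}$, since the paper's definition of a sub-Gaussian vector already controls $|\be_i^T\bv|$ uniformly over unit $\bv$ with parameter $k$ alone (so your constant $K$ can be taken to depend only on $k$). The paper's approach is shorter if one is willing to cite the black-box theorem, while yours makes the argument transparent and slightly sharpens the hypotheses. One small imprecision: with your choice of $s$ the exponent exceeds $(n+q)\log 9$ by \emph{at least} $t^2/2$ (because of the cross term), not ``exactly''; this is harmless since the inequality goes the right way.
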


\begin{proof}[Proof of Lemma~\ref{Es}]
For some positive constants $c_1$, $C_1$ and any $t_1 > 0$,  in view of \citet[Theorem 5.39]{Eldar2012}, we have with probability at least $1-2\exp(-c_1t_1^2)$,
\begin{align*}
\|\frac{1}{n}\bE^{T}\bE-\mathbf{\Sigma}^{\star}\|_2\leq \max(\delta,\delta^2),
\end{align*}
where $\delta=C_1\sqrt{\frac{q}{n}}+\frac{t_1}{\sqrt{n}}$. Then it follows from the triangular inequality that
\begin{align}\label{yang1}
\|\frac{1}{n}\bE^{T}\bE-\bI\|_2\leq \|\frac{1}{n}\bE^{T}\bE-\mathbf{\Sigma}^{\star}\|_2+\|\mathbf{\Sigma}^{\star}-\bI\|_2.
\end{align}
Since the eigenvalues of $\mathbf{\Sigma}^{\star}$ are bounded from above by $\gamma_u^{\star}$, there exists some positive constant $\widetilde{\gamma}_u$ such that
\begin{align*}
\|\mathbf{\Sigma}^{\star}-\bI\|_2 \leq \widetilde{\gamma}_u. 
\end{align*}
This inequality along with (\ref{yang1}) yields
\begin{align*}
\|\frac{1}{n}\bE^{T}\bE-\bI\|_2\leq \|\frac{1}{n}\bE^{T}\bE-\mathbf{\Sigma}^{\star}\|_2+\|\mathbf{\Sigma}^{\star}-\bI\|_2\leq\max(\delta,\delta^2)+\gamma_u''\leq\max(\delta_1,\delta_1^2),
\end{align*}
where $\gamma_u''=\max(\widetilde{\gamma}_u,1/{\widetilde{\gamma}_u})\geq1$ and $\delta_1=\delta+\gamma_u''$.

Then with the aid of \citet[Lemma 5.36]{Eldar2012}, it follows from the above inequality that
\begin{align*}
\|\bE/\sqrt{n}\|_{2}  \leq 1+\delta_1= 1+ \gamma_u''+ C_1\sqrt{\frac{q}{n}}+\frac{t_1}{\sqrt{n}},
\end{align*}
which yields
\begin{align*}
\|\bE\|_{2}  \leq  (1+ \gamma_u'')\sqrt{n} + C_1\sqrt{q}+t_1.
\end{align*}
Then taking $t_1 = t/\sqrt{2c_1}$,  we have with probability at least $1- 2\exp(-\frac{t^2}{2})$,
\begin{align*}
\|\bE\|_{2}  \leq  (1+ \gamma_u'')\sqrt{n} + C_1\sqrt{q}+t/\sqrt{2c_1}\leq \gamma^{\star} (\sqrt{n} + \sqrt{q} + t),
\end{align*}
where $\gamma^{\star}=\max\{(1+ \gamma_u''),C_1,1/\sqrt{2c_1}\}\geq2$. It yields the result of Lemma~\ref{Es}.

\end{proof}

\end{document}